\newtheorem{theorem}{Theorem}
\newtheorem{proposition}{Proposition}
\newtheorem{lemma}{Lemma}
\theoremstyle{definition}
\newtheorem{definition}{Definition}
\newtheorem{assumption}{Assumption}
\newtheorem*{assumption*}{Assumption}
\newtheorem*{problem*}{Problem}
\newtheorem{problem}{Problem}
\theoremstyle{remark}
\newtheorem*{solution*}{Solution}
\newcommand{\calD}{{\cal D}}
\newcommand{\calF}{{\cal F}}
\newcommand{\calG}{{\cal G}}
\newcommand{\calO}{{\cal O}}
\newcommand{\calS}{{\cal S}}
\newcommand{\calX}{{\cal X}}
\newcommand{\bfu}{\mathbf{u}}
\newcommand{\bfx}{\mathbf{x}}
\newcommand{\bfy}{\mathbf{y}}
\newcommand{\bfrho}{\boldsymbol{\rho}}
\newcommand{\bbR}{\mathbb{R}}
\newcommand{\bbS}{\mathbb{S}}
\newcommand{\scaleLine}[2][1]{\resizebox{#1\linewidth}{!}{#2}}
\newcommand{\scaleMathLine}[2][1]{\resizebox{#1\linewidth}{!}{$\displaystyle{#2}$}}
\newcommand{\prl}[1]{\left(#1\right)}
\newcommand{\brl}[1]{\left[#1\right]}
\newcommand{\crl}[1]{\left\{#1\right\}}
\DeclarePairedDelimiterX{\norm}[1]{\lVert}{\rVert}{#1}
\newcommand*{\intset}[1]{[#1]_\mathbb{N}}
\newcommand*{\col}[1]{\textbf{col}\left(#1\right)}
\newcommand*{\Bb}{\mathbf{b}}
\newcommand*{\Bc}{\mathbf{c}}
\newcommand*{\Bf}{\mathbf{f}}
\newcommand*{\Bg}{\mathbf{g}}
\newcommand*{\Bh}{\mathbf{h}}
\newcommand*{\Bm}{\mathbf{m}}
\newcommand*{\Bn}{\mathbf{n}}
\newcommand*{\Bp}{\mathbf{p}}
\newcommand*{\Bq}{\mathbf{q}}
\newcommand*{\Br}{\mathbf{r}}
\newcommand*{\Bs}{\mathbf{s}}
\newcommand*{\Bu}{\mathbf{u}}
\newcommand*{\Bv}{\mathbf{v}}
\newcommand*{\Bw}{\mathbf{w}}
\newcommand*{\Bx}{\mathbf{x}}
\newcommand*{\By}{\mathbf{y}}
\newcommand*{\Bz}{\mathbf{z}}
\newcommand*{\BA}{\mathbf{A}}
\newcommand*{\BB}{\mathbf{B}}
\newcommand*{\BC}{\mathbf{C}}
\newcommand*{\BG}{\mathbf{G}}
\newcommand*{\BI}{\mathbf{I}}
\newcommand*{\BK}{\mathbf{K}}
\newcommand*{\BM}{\mathbf{M}}
\newcommand*{\BP}{\mathbf{P}}
\newcommand*{\BQ}{\mathbf{Q}}
\newcommand*{\BS}{\mathbf{S}}
\newcommand*{\BT}{\mathbf{T}}
\newcommand*{\Bxi}{\boldsymbol{\xi}}
\newcommand*{\Bzero}{\mathbf{0}}
\newcommand*{\C}[1]{\mathcal{#1}}
\newcommand*{\CA}{\mathcal{A}}
\newcommand*{\CE}{\mathcal{E}}
\newcommand*{\CF}{\mathcal{F}}
\newcommand*{\CL}{\mathcal{L}}
\newcommand*{\CO}{\mathcal{O}}
\newcommand*{\CX}{\mathcal{X}}
\newcommand*{\LS}{\mathcal{LS}}
\newcommand*{\gp}{\Bg}
\newcommand*{\dgp}{\dot{\Bg}}
\newcommand*{\lpg}{\bar{\Bg}}
\newcommand*{\DeltaE}{\Delta E}
\newcommand*{\carlengthtt}{l}
\newcommand*{\carspeed}{v}
\newcommand*{\carsteering}{\delta}
\newcommand*{\carorientation}{\psi}
\newcommand*{\dcarspeed}{\dot{v}}
\newcommand*{\dcarsteering}{\dot{\delta}}
\newcommand*{\dcarorientation}{\dot{\carorientation}}
\newcommand*{\stateAck}{\Bx}
\newcommand*{\stateRgs}{\Bs}
\newcommand*{\controlAck}{\Bu}
\newcommand*{\outputAck}{\By}
\newcommand*{\dddoutputAck}{\dddot{\By}}
\newcommand*{\nPath}{m}
\newcommand*{\nRbtState}{n}
\newcommand*{\nInput}{m}
\newcommand*{\nOutput}{m}
\newcommand*{\tRbt}{\tilde{\Bz}} 
\newcommand*{\dtRbt}{\dot{\tRbt}} 
\begin{document}
\title{Safe Robot Navigation in Cluttered Environments using Invariant Ellipsoids and a Reference Governor}
\author{Zhichao Li\quad Thai Duong\quad  Nikolay Atanasov
	\thanks{The authors are with the Department of Electrical and Computer Engineering, University of California, San Diego, La Jolla, CA 92093, USA {\tt\small \{zhl355,natanasov\}@eng.ucsd.edu}.}%
}


\maketitle

\begin{abstract}
This paper considers the problem of safe autonomous navigation in unknown environments, relying on local obstacle sensing. We consider a control-affine nonlinear robot system subject to bounded input noise and rely on feedback linearization to determine ellipsoid output bounds on the closed-loop robot trajectory under stabilizing control. A virtual governor system is developed to adaptively track a desired navigation path, while relying on the robot trajectory bounds to slow down if safety is endangered and speed up otherwise. The main contribution is the derivation of theoretical guarantees for safe nonlinear system path-following control and its application to autonomous robot navigation in unknown environments. 
\end{abstract}

\IEEEpeerreviewmaketitle

\section{Introduction}
\label{sec:introduction}
Safe autonomous operation in unstructured, dynamic, and a priori unknown environments is a cornerstone problem of robotics. Achieving safe autonomous navigation with ground and aerial robots will have transformative impact on transportation, structure inspection, and environmental monitoring. Similarly, achieving safe autonomous manipulation may have transformative impact on product assembly, construction, and medical applications. Research in motion planning and dynamical system control has led to significant progress in these directions. Geometric motion planning algorithms, including graph search~\cite{ARAstar,MHA} and sampling-based techniques~\cite{RRT,RRTstar,BITstar}, generate safe shortest paths in complex high-dimensional problems. However, ensuring that the planned paths are dynamically feasible and remain safe when a real-time tracking controller is employed on the robot system is an active area of research. Instead of geometric paths, kinodynamic planning~\cite{kinodynamic_rrts_webb2013,SST} aims to find high-order dynamically feasible trajectories directly simplifying the real-time control task. However, is challenging to plan trajectories over high-order variables such as angular acceleration or linear jerk, especially for robots with many degrees of freedom. Moreover, deciding the right trajectory clearance or time allocation during planning are major challenges in dynamically changing environments and in the presence of disturbances. These considerations make current techniques computationally challenging and, yet, safety guarantees when the planner is coupled with closed-loop controller are difficult to obtain.

The focus of this paper is a control design for nonlinear control-affine robot systems that provides theoretical guarantees for safe trajectory tracking when input disturbances are present, the environment is unknown and perceived only through local onboard sensing, and a only low-order geometric path is provided. Our approach relies on feedback linearization to determine invariant ellipsoid bounds on the closed-loop robot trajectory under stabilizing control. In contrast to existing feedback-linearization procedures, we examine the impact of disturbances carefully and develop semi-definite programming (SDP) and Lyapunov-equation techniques to tightly and computationally efficiently bound the worst-case system behavior. Inspired by \emph{reference governor} control techniques~\cite{garone2016_ERG,kolmanovsky2014ref_cmd_gov,Gov_ICRA17}, we develop a first-order virtual system to guide the real robot along a desired geometric path. The governor serves as a local equilibrium point for the robot that is able to move if safety is not endangered. In detail, the governor evaluates risk according to the volume of the local free space and the robot activeness, measured by the volume of the ellipsoid trajectory bounds, and adaptively selects its path-tracking speed to guarantee that the real robot remains safe and stable.

\begin{figure}[t]
\includegraphics[width=1\linewidth]{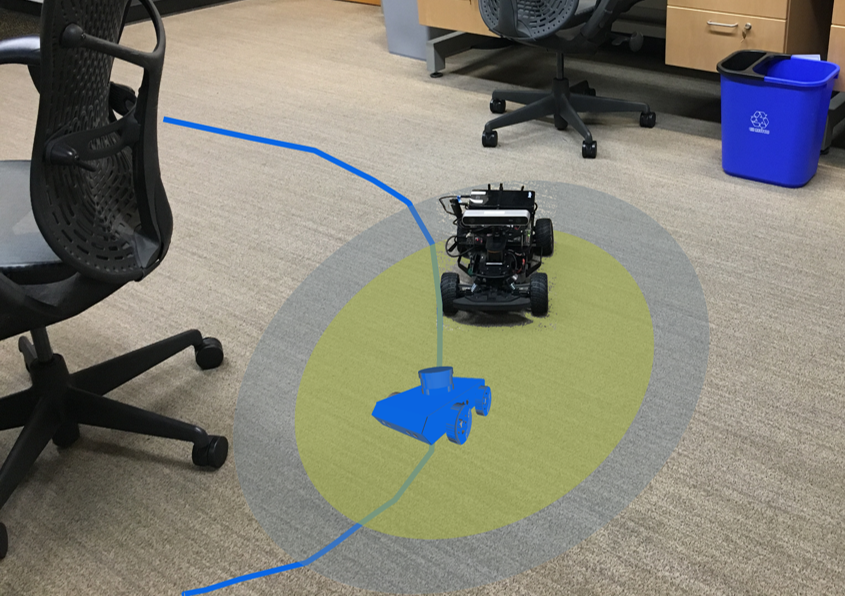}
\caption{An Ackermann-drive is regulated via feedback linearization towards a virtual first-order governor system (blue robot). An invariant ellipsoid (yellow) that bounds the possible robot trajectories defines a local safe zone. The governor system tracks a desired geometric path (blue curve) adaptively, slowing down if the local safe zone is approaching the limit of the local free space (grey ellipse) and speeding up otherwise.}
\label{fig:overview}
\end{figure}

\subsection{Related Work}
Safe corridor construction~\cite{SFC,SFC_FM} during geometric motion planning is a promising approach for generating high-order dynamically feasible trajectories. Existing approaches build the largest possible safe region around a geometric path using a series of connected convex sets (e.g., ellipsoids or polyhedra) and quadratic programming is used to generate safe dynamically feasible trajectories. A sampling-based Learning Model Predictive Controller is proposed in~\cite{rosolia2019sample} to build a safe set by iteratively refining a convex hull approximation of the reachable set of a linear system subject to bounded disturbance. The reachable set is approximated by finite-horizon closed-loop trajectories with a sampled disturbance sequence. A Q-value function, which behaves like a robust control Lyapunov function, and an associated control policy are obtained from the sampled trajectories.

To guarantee safety without sampling, most existing works rely on Lyapunov theory and reachable set over-approximations. Building on the seminal work of~\cite{burridge1999sequential}, sequential composition of funnels~\cite{funnel_idea} (reachable set approximations) offers effective means of guaranteeing safe navigation~\cite{lqr_tree_tedrake2009, Funnel_lib, gawron2017vfo, gawron_2018IROS_VFO}. Using sum-of-squares (SOS) optimization~\cite{sos_stability}, these techniques can deal with nonlinear systems directly and may handle nonholonomic constraints and bounded disturbances. While funnels are accurate in representing the system dynamics, the design of a funnel library faces a trade-off between the size of the library and computational complexity of composing the funnels online. A trajectory independent tracking controller was proposed by~\cite{singh2017robust} based on contraction theory. In the paper, a robust control invariant tube is constructed by minimizing a certain control contraction metric using SOS programming. Using Hamilton-Jacobi reachability, one can deal with nonlinear systems that are not control-affine. Herbert et al.~\cite{herbert2017fastrack} proposed a hybrid controller, which consists of a regular trajectory tracking controller and a safety controller based on reachability analysis. The reachability analysis is very accurate leading to fast and safe tracking performance but too computationally demanding to be performed online, requiring an a priori known environment. Control barrier function methods~\cite{ADA-SC-ME-GM-KS-PT:19,CBF_ames2014rapidly, CBF_wu2015safety, CBF_ames2016control, CBF_quadrotor} have gained significant attention because they allow including safety constraints into a quadratic program optimization of the control input without requiring feedback-linearization or reachability set approximations.

The reference governor framework~\cite{garone2016_ERG, kolmanovsky2014ref_cmd_gov,Gov_ICRA17} is a different approach to enforcing safety constraints that modulates the motion of a virtual system while the real pre-stabilized system tracks its motion. Arslan and Koditschek~\cite{Gov_ICRA17} developed a reference-governor controller for a double-integrator system navigating among spherical obstacles. The key idea is to construct a Lyapunov function in the sum-of-squares form and ensure safety by regulating the motion of the governor based on the size of the Lyapunov invariant set. Our work extends this idea significantly by applying it to general feedback-linearizable control-affine systems~\cite{fliess1995flatness, calvet1988feedforward_quasi_linSys, deluca1998feedback} in the presence of input disturbances.

\subsection{Contributions}
The two main contributions of this work are highlighted as follows. First, we develop two methods for obtaining tight bounds on the peak output of a feedback-linearizable control-affine system in the presence of disturbances: a semi-definite programming (SDP) and a Lyapunov-equation approach. Second, we incorporate the output peak bounds in a governor control design, proving joint safety and stability for output trajectory tracking, and show an application of the complete approach on an Ackermann-drive robot. Our construction relies only on local obstacle information from onboard sensors and, hence, can be used in priori unknown environments.

\section{Notation}
\label{sec:notation}

Define $\intset{a,b}\coloneqq \crl{a, a+1, \ldots, b}$ for $a, b \in \mathbb{N}$, $a < b$ and let $\intset{b} \coloneqq \intset{1,b}$. Let $\CL_\Bf V(\Bx)$ denote the Lie derivative of a function $V(\Bx)$ along a vector field $\Bf(\Bx)$. We use $\col{\Bv_1, \Bv_2, \ldots, \Bv_n} \coloneqq [\Bv_1^\top, \Bv_2^\top, \ldots, \Bv_n^\top]^\top$ to denote vertical stacking of scalars, vectors, or matrices. Let $\mathbb{S}^n_{>0}$ denote the space of $n \times n$ positive definite matrices. The quadratic norm induced by $\BS \in \mathbb{S}^n_{>0}$ will be denoted by $\norm{\Bp}_\BS \coloneqq \sqrt{\Bp^\top \BS \Bp}$ and $\norm{\Bp}$ will be the Euclidean norm. Denote the distance from a point $\Bp$ to a set $\CA$ as $d_\BS(\Bp, \CA) \coloneqq \inf_{\Bq \in \CA} \norm{\Bq - \Bp}_\BS$ and let $d(\Bp, \CA)$ be the distance in the Euclidean norm.

\section{Problem Statement}
\label{sec:problem}

Consider a robot operating in an unknown environment $\C{W} \subseteq \mathbb{R}^{\nPath}$ with obstacle space denoted by $\CO \subset \C{W}$. Denote the free space by $\C{F} \coloneqq \C{W} \setminus \C{O}$ and its interior by $\mathring{\CF}$. The robot is modeled as a control-affine nonlinear dynamical system,
\begin{equation}
\label{eq:sys_pf}
\begin{aligned}
\dot{\Bx} &= \Bf(\Bx) + \BG(\Bx) (\Bu + \Bw), \qquad \Bx(t_0) = \Bx_0 \\
\By &= \Bh(\Bx)
\end{aligned}
\end{equation}
where $\Bx \in \calX\subset \bbR^{\nRbtState}$  denotes the state variables, $\Bu \in \bbR^{\nInput}$ represents the control input, $\Bw \in \bbR^{\nInput}$ is the input noise, $\By \in \bbR^{\nOutput}$ are the system outputs\footnote{To simplify the presentation, we consider a system with the same number of inputs and outputs. Our approach relies on feedback linearization, which can generally be applied when fewer or more outputs are available. When more outputs are available, it is sufficient that the rectangular decoupling matrix $\BM(\Bx)$ defined in Def.~\ref{def:vector_relative_degree} has full column rank and its left-inverse $\BM(\Bx)^\dagger = \prl{\BM(\Bx)^{\top} \BM(\Bx)}^{-1} \BM(\Bx)^{\top}$ can be used to define the feedback linearization. If fewer outputs than inputs are available, new ones can be defined following the procedure in~\cite[Prop.~9.16]{sastry}.}, and $\Bf(\Bx)$, $\BG(\Bx)$, $\Bh(\Bx)$ are smooth functions. We make the following assumptions.

\begin{assumption} \label{assump:bounded_noise}
The input noise is bounded, i.e., $\norm{\Bw} \leq \delta_\Bw$ for some $\delta_\Bw > 0$.
\end{assumption}

\begin{assumption}
\label{assump:feedback_linearizable}
System~\eqref{eq:sys_pf} satisfies the conditions for feedback linearization in~\cite[Lemma.~5.2.1]{isidori1995nonlinear} almost everywhere in $\calX$. 
\end{assumption}

Our approach relies on feedback linearization to transform the high-order nonlinear system dynamics~\eqref{eq:sys_pf} to a space defined by the system outputs $\By$, where the dynamics are simple. This is possible for many nonlinear systems, including differentially flat robots such as differential-drive, Ackermann-drive, fixed-wing aerial, and quadrotor robots \cite{Franch2009_ECC, murray1995differential, sreenath2013geometric}. For example, a simple geometric variable such as position may be used as the output $\bfy$ of an acceleration-controlled Ackermann-drive as shown in Sec.~\ref{sec:application}. Our objective is to design a control policy $\bfu(t)$ so that the system outputs $\bfy(t)$ follow a desired path in free space without violating safety constraints, i.e, $\By(t) \in \calF$ for all $t \geq t_0$.

\begin{definition}
\label{def:path}
A \emph{path} is a piecewise-continuous function $\Br: \brl{0,1} \mapsto \mathring{\CF}$ mapping a path-length parameter $\sigma \in \brl{0,1}$ to the interior of the free space. The start of a path is $\Br(0) \in \mathbb{R}^{\nPath}$; the end of a path is $\Br(1) \in \mathbb{R}^{\nPath}$.
\end{definition}

Many control techniques require a reference trajectory of the same dimension as the original system, which places significant burden on planning systems to ensure dynamic feasibility. Moreover, this decoupling into kinodynamic planning and stable tracking cannot guarantee safety because the trajectory time allocation provided by the planner cannot be altered by the controller. Our approach relies on low-dimensional geometric paths that may be computed efficiently by standard planning algorithms~\cite{RRTstar, lavalle2006planning,ARAstar}. The problem considered in this paper is stated below.


\begin{problem}
\label{Prob:Main_Prob}
Given a path $\Br$ such that $\Br(0) = \bfy(0)$, design a control policy $\bfu(t)$ so that the constrained state $\bfy(t)$ of~\eqref{eq:sys_pf} is asymptotically steered to $\Br(1)$, while remaining safe, i.e., $\bfy(t) \in \calF$ for all $t \geq 0$.
\end{problem}

We use feedback linearization under Assumption~\ref{assump:feedback_linearizable} to find a diffeomorphism that transforms the nonlinear system in~\eqref{eq:sys_pf} to a linear one. Then, we examine the impact of the input noise $\Bw(t)$ on the linearized system under Assumption~\ref{assump:bounded_noise} and derive accurate, yet efficiently computable, bounds on the output under stabilizing control. Finally, we introduce a first-order virtual \textit{governor} system that enables safe and adaptive tracking of the desired path. The real system aims to converge locally to the governor, while the governor adapts itself using the system trajectory bounds to progress along the path without endangering safety. The structure of the proposed safe adaptive controller is illustrated in Fig~\ref{fig:rgs_structure_rss}. Our approach achieves safe navigation with an Ackermann-drive robot in an unknown environment, relying only on onboard depth sensing.

\section{Feedback Linearization}
\label{sec:feedback_linearization}

Assumption~\ref{assump:feedback_linearizable} ensures system~\eqref{eq:sys_pf} has vector relative degree $\boldsymbol{\rho} \in \mathbb{R}^{\nInput}$ with $\sum_{i=1}^{\nInput} \rho_i = \nRbtState$.

\begin{definition}
\label{def:vector_relative_degree}
A control-affine nonlinear system~\eqref{eq:sys_pf} with outputs $y_i = h_i(\Bx)$ for $i \in \intset{\nInput}$ has \emph{vector relative degree} $(\rho_1, \ldots, \rho_{\nInput})$ in a region $\calD \subset \CX$ if, for all $\Bx \in \calD$:
	\begin{enumerate}
		\item $\CL_{\Bg_j} \CL_\Bf^{k-1} h_i(\Bx) = 0$ for all $j\in \intset{\nInput}$, $i \in \intset{\nInput}$, $k \in \intset{\rho_i}$
		\item the $\nInput \times \nInput$ \emph{decoupling matrix} below is invertible:
		\begin{equation}
		\label{eq:decoupling_matrix}
		\scaleMathLine[0.9]{\BM(\Bx) =\begin{bmatrix}
		\CL_{\Bg_1} \CL_\Bf^{\rho_1-1} h_1(\Bx) 		&\cdots 		&\CL_{\Bg_p} \CL_\Bf^{\rho_1-1} h_1(\Bx) \\
		\vdots											&\ddots			&\vdots \\
		\CL_{\Bg_1} \CL_\Bf^{\rho_\nOutput-1} h_q(\Bx) 		&\cdots 		&\CL_{\Bg_p} \CL_\Bf^{\rho_\nOutput-1} h_q (\Bx)
		\end{bmatrix}}
		\end{equation}
	\end{enumerate}
\end{definition}

Since the relative degree satisfies $\sum_{i=1}^{\nInput} \rho_i = \nRbtState$, we can define a diffeomorphic transformation of the state.

\begin{proposition}[{\cite[Prop. 5.1.2]{isidori1995nonlinear}}]
\label{prop:new_coordinates}
Assume the control-affine nonlinear system~\eqref{eq:sys_pf} with outputs $\By = \Bh(\Bx)\in \mathbb{R}^{\nInput}$ has vector relative degree $\boldsymbol{\rho} \in \mathbb{R}^{\nInput}$ in $\calD$ such that $\sum_{i=1}^{\nInput} \rho_i = \nRbtState$. Then, $\Phi(\Bx) \coloneqq \col{\Bxi^1, \ldots, \Bxi^{\nInput}}$, where:
\begin{equation}
\Bxi^i \coloneqq \begin{bmatrix}
\xi_1^i \\
\xi_2^i \\
\vdots \\
\xi_{\rho_i}^i
\end{bmatrix} = \begin{bmatrix}
\phi_1^i(\Bx) \\
\phi_2^i(\Bx) \\
\vdots \\
\phi_{\rho_i}^i(\Bx)
\end{bmatrix} = \begin{bmatrix}
h_i(\Bx) \\
\CL_\Bf h_i(\Bx) \\
\vdots \\
\CL_\Bf^{\rho_i-1} h_i(\Bx) 
\end{bmatrix}
\end{equation}
defines a local diffeomorphism.
\end{proposition}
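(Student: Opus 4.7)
The plan is to apply the inverse function theorem pointwise on $\calD$. Since each component $\phi_k^i(\Bx) = \CL_\Bf^{k-1} h_i(\Bx)$ of $\Phi$ is smooth and the target $\mathbb{R}^{\sum_i \rho_i} = \mathbb{R}^{\nRbtState}$ has the same dimension as the source, it suffices to show that the Jacobian $\partial \Phi/\partial \Bx$ is nonsingular at every $\Bx \in \calD$; this yields a smooth local inverse at each point and therefore the local-diffeomorphism conclusion. Nonsingularity of the Jacobian is equivalent to linear independence of the $\nRbtState$ covectors $\{d\CL_\Bf^{k-1} h_i(\Bx) : i \in \intset{\nInput},\, k \in \intset{\rho_i}\}$.

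To prove this independence, I would suppose a vanishing combination $\sum_{i,k} c_k^i\, d\CL_\Bf^{k-1} h_i(\Bx) = \Bzero$ and extract the coefficients in decreasing order of $k$. Pairing both sides with $\Bg_j(\Bx)$ and invoking the relative-degree vanishing conditions of Definition~\ref{def:vector_relative_degree} collapses the sum to $\sum_i c_{\rho_i}^i \CL_{\Bg_j}\CL_\Bf^{\rho_i-1} h_i(\Bx) = 0$ for every $j\in\intset{\nInput}$, i.e., $\BM(\Bx)^\top \Bc_{\mathrm{top}} = \Bzero$ with $\Bc_{\mathrm{top}} = (c_{\rho_1}^1,\dots,c_{\rho_\nInput}^\nInput)^\top$. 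Invertibility of the decoupling matrix $\BM(\Bx)$ then forces $c_{\rho_i}^i = 0$ for all $i$. The same idea, repeated with the iterated brackets $\mathrm{ad}_\Bf \Bg_j,\, \mathrm{ad}_\Bf^2 \Bg_j,\, \ldots$ in place of $\Bg_j$, should peel off the coefficients one layer at a time, each layer again governed by $\BM(\Bx)$ up to sign.

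The key algebraic ingredient, and the main obstacle in the proof, is the identity
\begin{equation}
\langle d\CL_\Bf^{k-1} h_i,\, \mathrm{ad}_\Bf^{\ell} \Bg_j\rangle \;=\; (-1)^{\ell}\, \CL_{\Bg_j}\CL_\Bf^{k-\ell-1} h_i \;+\; R_{i,j,k,\ell}(\Bx),
\end{equation}
where the remainder $R_{i,j,k,\ell}$ is a finite sum of terms of the form $\CL_\Bf^{a}\CL_{\Bg_j}\CL_\Bf^{b} h_i$ with $b$ strictly below the relative-degree threshold, hence identically zero on $\calD$. Establishing this identity by induction on $\ell$, using the definitions $\mathrm{ad}_\Bf^{\ell} \Bg_j = [\Bf, \mathrm{ad}_\Bf^{\ell-1}\Bg_j]$ and $\CL_{[\Bf,\Bv]}\psi = \CL_\Bf \CL_\Bv \psi - \CL_\Bv \CL_\Bf \psi$, is what drives the peeling argument: at step $\ell$ the surviving equation is $(-1)^{\ell}\BM(\Bx)^\top \Bc_{\ell} = \Bzero$, where $\Bc_{\ell}$ collects the as-yet-undetermined coefficients at height $\rho_i - \ell$, and invertibility of $\BM(\Bx)$ again kills this layer.

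Once the bookkeeping in the induction on $\ell$ is settled, all $c_k^i$ vanish, the differentials $d\phi_k^i(\Bx)$ are linearly independent throughout $\calD$, the Jacobian of $\Phi$ is everywhere nonsingular on $\calD$, and the inverse function theorem supplies a smooth local inverse around every $\Bx_0 \in \calD$, completing the proof that $\Phi$ is a local diffeomorphism.
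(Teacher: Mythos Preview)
The paper does not supply its own proof of this proposition; it is quoted verbatim from Isidori (Prop.~5.1.2) and used as a black box. Your argument is essentially the standard proof found in that reference: invoke the inverse function theorem, reduce to linear independence of the differentials $d\CL_\Bf^{k-1}h_i$, and peel off coefficients layer by layer by pairing with the iterated brackets $\mathrm{ad}_\Bf^{\ell}\Bg_j$, using invertibility of $\BM(\Bx)$ at each step. So there is nothing in the paper to compare against, and your route is the expected one.

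One correction: the exponent in your displayed identity is off. The correct form (under the relative-degree vanishing hypotheses) is
\[
\bigl\langle d\CL_\Bf^{\,k-1} h_i,\; \mathrm{ad}_\Bf^{\ell} \Bg_j\bigr\rangle \;=\; (-1)^{\ell}\, \CL_{\Bg_j}\CL_\Bf^{\,k+\ell-1} h_i,
\]
valid whenever $k+\ell-1\leq \rho_i-1$, with all remainder terms vanishing because each contains a factor $\CL_{\Bg_j}\CL_\Bf^{\,s}h_i$ with $s\leq \rho_i-2$. With $k-\ell-1$ as you wrote it, the bookkeeping would not match your (correct) verbal description that step $\ell$ isolates the coefficients at height $\rho_i-\ell$. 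Once the exponent is fixed, the induction on $\ell$ goes through exactly as you describe: at each stage the surviving relation is $(-1)^{\ell}\BM(\Bx)^{\top}\Bc_\ell=\Bzero$, and nonsingularity of $\BM(\Bx)$ kills that layer. The rest of your outline (smoothness, equal dimensions, inverse function theorem) is fine.
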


Proposition~\ref{prop:new_coordinates} allows us to express~\eqref{eq:sys_pf} in new coordinates $\Bz = \Phi(\bfx)$, where the control input $\Bu$ can be chosen to cancel the nonlinearities and obtain a \emph{linear system} subject to state-dependent noise.
\begin{proposition} \label{prop:fdk_lin}
Applying change of coordinates, $\Bz = \Phi(\bfx)$, and control input:
\begin{equation}
\label{eq:fdk_ctrl_law_nf}
\Bu = \BM^{-1}(\Bx) \brl{-\Bn(\Bx) + \Bv},
\end{equation}
where $\BM(\Bx)$ is the decoupling matrix in~\eqref{eq:decoupling_matrix} and $\Bn(\Bx)\coloneqq \col{\CL_{\Bf}^{\rho_1}h_1,\ldots, \CL_{\Bf}^{\rho_\nInput}h_\nInput}$, to the system in~\eqref{eq:sys_pf} leads to linear dynamics with state-dependent noise:
\begin{align}
	\dot{\Bz} &= \BA \Bz + \BB \Bv + \BB_{\Bw}(\Phi^{-1}(\Bz)) \Bw, \quad \Bz(t_0) = \Bz_0 := \Phi(\Bx_0) \notag\\
	\By &= \BC \Bz \label{eq:sys_quasi_lin}
\end{align}
where $\BA \in \mathbb{R}^{\nRbtState \times \nRbtState}$, $\BB \in\mathbb{R}^{\nRbtState \times \nInput}$, $\BC \in \mathbb{R}^{\nInput\times \nRbtState}$ are block-diagonal matrices with blocks in Brunovsky Canonical Form and $\BB_\Bw(\Bx) = \col{\Bzero, \Bm_1(\Bx), \ldots, \Bzero, \Bm_\nInput(\Bx)} \in \mathbb{R}^{\nRbtState\times\nInput}$ where $\Bm_i(\Bx)$ is the $i$-th row of the decoupling matrix~\eqref{eq:decoupling_matrix}.
\end{proposition}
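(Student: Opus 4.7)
The plan is to verify directly that the coordinate change $\Bz = \Phi(\Bx)$ followed by the feedback law~\eqref{eq:fdk_ctrl_law_nf} converts the noisy system~\eqref{eq:sys_pf} into the claimed linear form with state-dependent noise. First, I would differentiate each component $\xi^i_k = \CL_\Bf^{k-1} h_i(\Bx)$ of the diffeomorphism along a trajectory of~\eqref{eq:sys_pf}. By the chain rule,
\[
\dot{\xi}^i_k \;=\; \CL_\Bf \CL_\Bf^{k-1} h_i(\Bx) \;+\; \sum_{j=1}^{\nInput} \CL_{\Bg_j} \CL_\Bf^{k-1} h_i(\Bx)\,(u_j + w_j),
\]
so the contribution of $\Bu$ and $\Bw$ is governed entirely by the mixed Lie derivatives $\CL_{\Bg_j}\CL_\Bf^{k-1}h_i(\Bx)$.

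Next, I would split into two cases. For $k < \rho_i$, the first clause of Def.~\ref{def:vector_relative_degree} forces these mixed Lie derivatives to vanish, yielding the pure-integrator relation $\dot{\xi}^i_k = \CL_\Bf^{k} h_i(\Bx) = \xi^i_{k+1}$, in which neither the input nor the noise appears. For $k = \rho_i$, the same mixed Lie derivatives are precisely the entries of the $i$-th row $\Bm_i(\Bx)$ of the decoupling matrix~\eqref{eq:decoupling_matrix}, so
\[
\dot{\xi}^i_{\rho_i} \;=\; \CL_\Bf^{\rho_i} h_i(\Bx) + \Bm_i(\Bx)\,(\Bu + \Bw) \;=\; n_i(\Bx) + \Bm_i(\Bx)\,(\Bu + \Bw).
\]
Substituting the feedback law $\Bu = \BM^{-1}(\Bx)\brl{-\Bn(\Bx)+\Bv}$, which is well-defined by invertibility of $\BM(\Bx)$, the $i$-th component of $\BM(\Bx)\Bu$ equals $-n_i(\Bx)+v_i$; the $n_i$ terms cancel and we obtain $\dot{\xi}^i_{\rho_i} = v_i + \Bm_i(\Bx)\Bw$.

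The last step is to package these scalar identities into matrix form. For each output channel $i \in \intset{\nInput}$, the block $\Bxi^i$ now obeys $\rho_i-1$ pure integrators capped by a bottom equation driven by $v_i$ and $\Bm_i(\Bx)\Bw$, i.e.\ a Brunovsky canonical block. Stacking over $i$ produces the block-diagonal matrices $\BA$ and $\BB$ in~\eqref{eq:sys_quasi_lin}, while $\BC$ selects the first coordinate of each block since $y_i = h_i(\Bx) = \xi^i_1$. The noise channel $\BB_\Bw(\Bx)$ inherits the same block structure: within the $i$-th block the first $\rho_i-1$ rows vanish and only the last row equals $\Bm_i(\Bx)$, which matches the compact notation $\col{\Bzero,\Bm_1(\Bx),\ldots,\Bzero,\Bm_\nInput(\Bx)}$. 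Since $\Phi$ is a diffeomorphism on $\calD$ by Prop.~\ref{prop:new_coordinates}, I may substitute $\Bx=\Phi^{-1}(\Bz)$ to express $\BB_\Bw$ as a function of $\Bz$ alone, completing the claim.

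I do not anticipate a genuine obstacle: the derivation is the standard feedback-linearization bookkeeping with the disturbance carried through. The only point requiring care is that $\Bw$ enters $\dot{\Bx}$ through the same input channel $\BG(\Bx)$ as $\Bu$, hence it is transformed by exactly the same row $\Bm_i(\Bx)$ of the decoupling matrix; this is precisely why the noise appears only at the bottom of each Brunovsky chain rather than being spread across all coordinates, and why the cancellation of $\Bn(\Bx)$ by the feedback law does \emph{not} simultaneously cancel the disturbance term $\Bm_i(\Bx)\Bw$.
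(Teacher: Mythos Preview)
Your proposal is correct and follows essentially the same route as the paper's proof: differentiate each $\xi^i_k$ along trajectories of~\eqref{eq:sys_pf}, invoke the vanishing of the mixed Lie derivatives for $k<\rho_i$ to obtain the integrator chains, collect the bottom equations into $\Bn(\Bx)+\BM(\Bx)(\Bu+\Bw)$, apply the feedback law to cancel $\Bn(\Bx)$, and then read off the block-diagonal Brunovsky matrices and the noise channel $\BB_\Bw(\Bx)$. Your write-up is in fact slightly more explicit than the appendix in distinguishing why the feedback cancels $\Bn(\Bx)$ but not the disturbance term $\BM(\Bx)\Bw$.
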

\begin{proof} \label{proof:fdk_lin}
See Appendix~\ref{app:proof_feedback_linearization}.
\end{proof}

\section{Safe and Stable Regulation}
\label{sec:regulation}

In this section, we choose a stabilizing controller $\Bv = -\BK \Bz$ for the linearized system~\eqref{eq:sys_quasi_lin} and analyze its peak output $\max_{t \geq t_0} \norm{\By(t)}^2$ in order to guarantee safety. The closed-loop system is
\begin{equation}
\label{eq:sys_quasi_lin_closed_loop}
\begin{aligned}
\dot{\Bz} &= \prl{\BA - \BB \BK} \Bz + \BB_{\Bw}(\Bx) \Bw\\
\By &= \BC \Bz
\end{aligned}
\end{equation}
where $\BK$ is chosen so that $(\BA - \BB \BK)$ is Hurwitz and, in the absence of disturbances, the origin is a globally exponentially stable (G.E.S.) equilibrium. To proceed, we also need a bound on the peak norm of the decoupling matrix.

\begin{assumption} \label{asp:Bw_bound}
There exists a finite upper bound, $\norm{\BM(\Bx)}_2 \leq \gamma(\Bx_0)$, on the norm of the decoupling matrix.
\end{assumption}

It will be convenient to reorder the states through a permutation matrix $\BT$ such that $\tRbt \coloneqq \BT \Bz = \col{\tilde{\Bz}_1, \tilde{\Bz}_2}$, with $\tRbt_1 = \By$. Due to Prop.~\ref{prop:new_coordinates} such a permutation $\BT$ always exists. Note that
\begin{equation}
\BT \BB_\Bw(\Bx)\Bw = \begin{bmatrix}
	\Bzero \\
	\BM(\Bx)
\end{bmatrix} \Bw = \begin{bmatrix}
	\Bzero \\
	\gamma(\Bx_0) \delta_\Bw \BI
\end{bmatrix}\frac{\BM(\Bx)\Bw}{\gamma(\Bx_0) \delta_\Bw}
\end{equation}
and system~\eqref{eq:sys_quasi_lin_closed_loop} with reordered states becomes:
\begin{equation} \label{eq:sys_relaxed}
\begin{aligned}
\dot{\tilde{\Bz}} &= \bar{\BA} \tilde{\Bz} + \bar{\BB} \bar{\Bw}, \qquad \tRbt(t_0) = \tRbt_0 =  \BT \Phi(\Bx_0) \\
\By &= \bar{\BC} \tilde{\Bz}
\end{aligned}
\end{equation}
where $\bar{\BA} \coloneqq \BT (\BA- \BB \BK)  \BT ^{-1}$, $\bar{\BB} \coloneqq \col{\Bzero,  \gamma(\Bx_0) \delta_\Bw \BI}$, $\bar{\Bw} \coloneqq \BM(\Bx)\Bw / \gamma(\Bx_0) \delta_\Bw$ and $\bar{\BC}  \coloneqq \BT\BC =  \brl{\BI, \Bzero}$. We are interested in finding the output peak of~\eqref{eq:sys_relaxed} along the system trajectory:
\begin{equation}
\label{eq:output_norm}
\eta(t_0) := \max_{t \geq t_0} \|\By(t)\|_{\BS}^2
\end{equation}
for some $\BS \in \mathbb{S}_{>0}^{\nOutput}$ to be specified later. Finding the exact output bound may be challenging~\cite{abedor1996linear}. Instead, we compute the tightest upper bound on $\eta(t_0)$ over an ellipsoid outer approximation of all possible system trajectories of~\eqref{eq:sys_relaxed} starting at $\tRbt_0$. Denote an ellipsoid in $\mathbb{R}^{\nRbtState}$, centered at $\Bp \in \mathbb{R}^{\nRbtState}$ and defined by $\BP \in \mathbb{S}_{>0}^{\nRbtState}$, as:
\begin{equation} \label{eq:ellipsoid_unit_energy_form}
\CE(\BP, \Bp)  \coloneqq \crl{\Bq \in \bbR^{\nRbtState}  \mid \prl{\Bq - \Bp}^{\top} \BP \prl{\Bq - \Bp} \leq 1}
\end{equation}

\begin{definition}\label{def:inv_ellipsoid}
An ellipsoid $\CE$ is \emph{positively invariant} for dynamical system~\eqref{eq:sys_relaxed} if $\tilde{\Bz}(t_0) \in \CE$ implies $\tilde{\Bz}(t) \in \CE$, for all $t \geq t_0$ and every system trajectory $\tilde{\Bz}(t)$.
\end{definition}

Thus, if $\CE_{inv}$ is a positively invariant ellipsoid for~\eqref{eq:sys_relaxed}, we can obtain an upper bound on $\eta(t_0)$ as follows:
\begin{equation}
\label{eq:peak_upperbound}
\eta(t_0) \leq \max_{\Bq \in \CE_{inv}} \|\bar{\BC}\Bq\|^2_{\BS}
\end{equation}
The following result allows us to find a positively invariant ellipsoid, centered at the equilibrium point of~\eqref{eq:sys_relaxed}, in the presence of bounded disturbances.

\begin{lemma}[{\cite{boyd_LMI_book}}] \label{lemma:invariant_ellipsoid}
Consider system \eqref{eq:sys_relaxed} with bounded disturbance $\norm{\bar{\Bw}}_2 \leq 1$. The ellipsoid $\CE(\BP,\Bzero)$ is positively invariant if and only if there exists a real scalar $\alpha \geq 0$ such that
	\begin{equation}
	\begin{bmatrix} \label{eq:LMI_PIE}
	\bar{\BA}^{\top} \BP + \BP \bar{\BA} + \alpha \BP 	& \BP \bar{\BB} \\
	\bar{\BB}^{\top} \BP	& -\alpha \BI
	\end{bmatrix} \preceq \Bzero
	\end{equation}
\end{lemma}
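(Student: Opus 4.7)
The plan is to base the proof on the quadratic Lyapunov candidate $V(\tilde{\Bz}) = \tilde{\Bz}^\top \BP \tilde{\Bz}$, so that the candidate invariant set is exactly the sublevel set $\CE(\BP,\Bzero) = \{\tilde{\Bz} : V(\tilde{\Bz}) \leq 1\}$. Along trajectories of \eqref{eq:sys_relaxed},
\begin{equation*}
\dot V = \tilde{\Bz}^\top(\bar{\BA}^\top \BP + \BP \bar{\BA})\tilde{\Bz} + 2\tilde{\Bz}^\top \BP \bar{\BB}\bar{\Bw}.
\end{equation*}
The whole argument will be reduced to the question: does there exist a scalar $\alpha \geq 0$ such that the quadratic form
\begin{equation*}
\begin{bmatrix}\tilde{\Bz}\\ \bar{\Bw}\end{bmatrix}^{\!\top}\!\!
\begin{bmatrix} \bar{\BA}^\top \BP + \BP\bar{\BA}+\alpha \BP & \BP \bar{\BB}\\ \bar{\BB}^\top \BP & -\alpha \BI\end{bmatrix}\!
\begin{bmatrix}\tilde{\Bz}\\ \bar{\Bw}\end{bmatrix} \leq 0
\end{equation*}
holds for all $(\tilde{\Bz},\bar{\Bw})$, because that inequality is exactly the LMI \eqref{eq:LMI_PIE} in disguise.

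For sufficiency I would unpack the LMI into the pointwise dissipation bound $\dot V \leq -\alpha V(\tilde{\Bz}) + \alpha \|\bar{\Bw}\|^2 \leq -\alpha V(\tilde{\Bz}) + \alpha$, valid for every admissible noise $\|\bar{\Bw}\|\leq 1$. A direct comparison argument (Gr\"onwall applied to $\dot V + \alpha V \leq \alpha$) then yields $V(\tilde{\Bz}(t)) \leq e^{-\alpha(t-t_0)} V(\tilde{\Bz}(t_0)) + (1-e^{-\alpha(t-t_0)})$, which is bounded by $1$ whenever $V(\tilde{\Bz}(t_0))\leq 1$. This is exactly the definition of positive invariance in Def.~\ref{def:inv_ellipsoid}, with the degenerate case $\alpha = 0$ handled by taking the limit (or equivalently noting that $\dot V \leq 0$ on the boundary).

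For necessity, the idea is to pass from invariance to a \emph{worst-case} infinitesimal condition on the boundary $V(\tilde{\Bz})=1$: if $\CE(\BP,\Bzero)$ is invariant then, at every boundary point $\tilde{\Bz}$ and every disturbance with $\|\bar{\Bw}\|\leq 1$, the flow cannot strictly point outward, so $\dot V(\tilde{\Bz},\bar{\Bw})\leq 0$. This is a constrained quadratic inequality: the quadratic $\dot V$ is non-positive on the set cut out by the two quadratic constraints $\tilde{\Bz}^\top\BP\tilde{\Bz}=1$ and $\bar{\Bw}^\top\bar{\Bw}\leq 1$. Invoking the lossless S-procedure (which applies because $\BP\succ 0$ provides a Slater-type strict feasibility witness) produces a multiplier $\alpha\geq 0$ for the disturbance constraint and a multiplier that can be absorbed into the same $\alpha$ for the Lyapunov constraint, yielding precisely the block-matrix inequality \eqref{eq:LMI_PIE}.

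The main obstacle I anticipate is the necessity direction: I have to justify that the S-procedure is lossless in this particular two-constraint setting and argue that a \emph{single} multiplier $\alpha$ suffices on both diagonal blocks rather than two independent multipliers. The cleanest workaround, if the single-$\alpha$ S-procedure step turns out to be delicate, is to appeal directly to the corresponding result in~\cite{boyd_LMI_book}, which establishes this equivalence and from which the lemma is quoted; the role of my proof plan is then to verify the sufficiency portion in self-contained form and to outline how the necessity portion reduces to that cited S-procedure statement.
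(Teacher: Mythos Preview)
The paper does not supply its own proof of this lemma; it is simply quoted from~\cite{boyd_LMI_book} and used as a black box. So there is nothing in the paper to compare your argument against, and your proposal is in fact \emph{more} detailed than what the paper provides.

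Your sufficiency direction is correct and complete: expanding the LMI gives $\dot V \leq -\alpha(V-\|\bar{\Bw}\|^2) \leq -\alpha(V-1)$, and the comparison argument you describe yields invariance of $\{V\leq 1\}$.

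On necessity, the obstacle you flag (why a \emph{single} multiplier $\alpha$ suffices for what looks like two constraints) is real, and your proposal does not actually resolve it. The standard fix, which you may want to incorporate, is a homogeneity reduction: since $\dot V$, $\tilde{\Bz}^\top\BP\tilde{\Bz}$, and $\bar{\Bw}^\top\bar{\Bw}$ are all homogeneous of degree two in $(\tilde{\Bz},\bar{\Bw})$, the boundary condition ``$\dot V\leq 0$ whenever $\tilde{\Bz}^\top\BP\tilde{\Bz}=1$ and $\bar{\Bw}^\top\bar{\Bw}\leq 1$'' is equivalent, by scaling, to ``$\dot V\leq 0$ whenever $\tilde{\Bz}^\top\BP\tilde{\Bz}\geq \bar{\Bw}^\top\bar{\Bw}$''. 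This is now a \emph{single} quadratic constraint, so the lossless one-constraint S-procedure applies directly and delivers a single multiplier $\alpha\geq 0$, giving exactly~\eqref{eq:LMI_PIE}. Without this reduction your two-multiplier S-procedure would in general produce $\alpha_1\BP$ and $-\alpha_2\BI$ on the diagonal blocks, and arguing $\alpha_1=\alpha_2$ after the fact is not straightforward. Your fallback of citing~\cite{boyd_LMI_book} is of course exactly what the paper does.
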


Conditions~\eqref{eq:peak_upperbound} and~\eqref{eq:LMI_PIE} can be combined to obtain a tight bound on the output peak $\eta(t_0)$.


\begin{theorem} \label{thm:output_bound_sdp}
Consider system~\eqref{eq:sys_relaxed} with bounded disturbance $\norm{\bar{\Bw}}_2 \leq 1$. The output of any trajectory, starting at $\tilde{\Bz}_0$, is bounded for all $t \geq t_0$ as follows:
\begin{equation}
\label{eq:output_bound_sdp}
\|\By(t)\|_{\BS}^2 \leq \eta(t_0) \leq  \delta_\eta(\alpha^*;\tilde{\Bz}_0) \coloneqq \min_{\alpha \in (0, \bar{\alpha})} \delta_\eta(\alpha;\tilde{\Bz}_0)
\end{equation}
where $\bar{\alpha} \coloneqq - 2 \max \prl{\text{real} \prl{\text{spec}(\bar{\BA})}}$ and $\delta_\eta (\alpha;\tilde{\Bz}_0)$ is the solution to the semi-definite program (SDP):
\begin{equation}
\label{eq:SDP_UB}
\begin{aligned}
& \underset{\BP, \delta}{\text{minimize}} & &  \delta \\
& \text{subject to}
& &  \begin{bmatrix}
\bar{\BA}^{\top} \BP + \BP \bar{\BA} + \alpha \BP 	& \BP \bar{\BB} \\
\bar{\BB}^{\top} \BP	& -\alpha \BI
\end{bmatrix} \preceq \Bzero  \\
&  & & \begin{bmatrix}
\BP 	& \bar{\BC}^{\top}\BS^{1/2} \\
\BS^{1/2}\bar{\BC}	& \delta \BI
\end{bmatrix} \succeq \Bzero, \;\;  \BP \succ \Bzero\\
& & & \tilde{\Bz}_0^{\top} \BP \tilde{\Bz}_0 \leq 1
\end{aligned}
\end{equation}
\scaleLine{Moreover, $\delta_\eta(\alpha^*;\Bzero)$ is the ultimate bound for $\|\By(t)\|_{\BS}^2$ as $t \to \infty$.}
\end{theorem}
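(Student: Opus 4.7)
The plan is to assemble the theorem from three ingredients: the positive-invariance Lemma~\ref{lemma:invariant_ellipsoid}, a Schur-complement reformulation of the output-peak bound~\eqref{eq:peak_upperbound}, and a containment constraint forcing the initial state into the invariant set. First I would fix an $\alpha \in (0, \bar{\alpha})$ and observe that Lemma~\ref{lemma:invariant_ellipsoid} gives exactly the first LMI of~\eqref{eq:SDP_UB}: any $\BP \succ \Bzero$ satisfying it makes $\CE(\BP, \Bzero)$ positively invariant for~\eqref{eq:sys_relaxed}. The constraint $\tilde{\Bz}_0^\top \BP \tilde{\Bz}_0 \leq 1$ then places the initial condition into $\CE(\BP, \Bzero)$, so $\tilde{\Bz}(t) \in \CE(\BP, \Bzero)$ for all $t \geq t_0$, and consequently $\eta(t_0) \leq \max_{\Bq \in \CE(\BP,\Bzero)} \|\bar{\BC}\Bq\|_{\BS}^2$ by~\eqref{eq:peak_upperbound}.

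Next I would translate the inner maximization into the second LMI. A standard S-procedure or direct Lagrange argument on $\max \Bq^\top \bar{\BC}^\top \BS \bar{\BC} \Bq$ subject to $\Bq^\top \BP \Bq \leq 1$ yields that the maximum is bounded by $\delta$ iff $\bar{\BC}^\top \BS \bar{\BC} \preceq \delta \BP$. Applying the Schur complement to this condition (using $\BS \succ \Bzero$ and $\BP \succ \Bzero$) produces exactly the $2\times 2$ block LMI with blocks $\BP$, $\bar{\BC}^\top \BS^{1/2}$, $\BS^{1/2}\bar{\BC}$, $\delta \BI$. Thus every feasible triple $(\BP, \delta, \alpha)$ of~\eqref{eq:SDP_UB} yields $\|\By(t)\|_{\BS}^2 \leq \delta$, and minimizing over $(\BP, \delta)$ for the given $\alpha$ defines $\delta_\eta(\alpha; \tilde{\Bz}_0)$; further minimizing over $\alpha \in (0, \bar{\alpha})$ gives $\delta_\eta(\alpha^*; \tilde{\Bz}_0)$.

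The step I expect to require care is justifying the admissible range $\alpha \in (0, \bar{\alpha})$ with $\bar{\alpha} = -2\max(\mathrm{real}(\mathrm{spec}(\bar{\BA})))$. The first LMI in~\eqref{eq:SDP_UB} implies (by taking its $(1,1)$ block) $(\bar{\BA} + \tfrac{\alpha}{2}\BI)^\top \BP + \BP(\bar{\BA} + \tfrac{\alpha}{2}\BI) \preceq \Bzero$. By the Lyapunov stability theorem, existence of $\BP \succ \Bzero$ satisfying this inequality forces $\bar{\BA} + \tfrac{\alpha}{2}\BI$ to have no eigenvalue with positive real part, i.e.\ $\alpha/2 \leq -\max(\mathrm{real}(\mathrm{spec}(\bar{\BA})))$. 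Since $(\bar{\BA} - \BB\BK)$ (hence $\bar{\BA}$) is Hurwitz by construction, $\bar{\alpha} > 0$, and for any $\alpha \in (0, \bar{\alpha})$ one can verify feasibility by taking $\BP$ proportional to the solution of the Lyapunov equation $(\bar{\BA}+\tfrac{\alpha}{2}\BI)^\top \BP + \BP (\bar{\BA}+\tfrac{\alpha}{2}\BI) = -\BQ$ for suitable $\BQ \succ \Bzero$ and scaling so the coupling block with $\bar{\BB}$ is dominated.

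Finally, for the ultimate-bound claim, I would argue that since $\bar{\BA}$ is Hurwitz the influence of the initial condition decays exponentially, so for any $\varepsilon > 0$ there exists $T$ with $\|e^{\bar{\BA}(t-t_0)}\tilde{\Bz}_0\| \leq \varepsilon$ for $t \geq T$. Re-running the SDP argument with $\tilde{\Bz}_0$ replaced by $\Bzero$ (the last constraint of~\eqref{eq:SDP_UB} becomes vacuous) yields the smallest invariant ellipsoid centered at the origin, giving $\limsup_{t \to \infty} \|\By(t)\|_{\BS}^2 \leq \delta_\eta(\alpha^*; \Bzero)$; the reverse inclusion follows by noting that the driven trajectory can asymptotically populate any neighborhood of the boundary under worst-case $\bar{\Bw}$, making $\delta_\eta(\alpha^*; \Bzero)$ the tightest such bound within the SDP relaxation.
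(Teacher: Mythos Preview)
Your proposal is correct and follows essentially the same route as the paper: invoke Lemma~\ref{lemma:invariant_ellipsoid} for the first LMI, convert the ellipsoid output-peak bound~\eqref{eq:peak_upperbound} into the second LMI via Schur complement, append the initial-condition containment constraint, and justify the range $\alpha\in(0,\bar{\alpha})$ from the $(1,1)$ block forcing $\bar{\BA}+\tfrac{\alpha}{2}\BI$ to be Hurwitz. The only cosmetic difference is that the paper phrases the output-peak step through a change of variable and the Rayleigh quotient (then Schur complement plus Sylvester's law of inertia) rather than your S-procedure/Lagrange wording, but the two derivations are equivalent.
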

\begin{proof}
See Appendix~\ref{app:proof_SDP_output_bound}.
\end{proof}

Thm.~\ref{thm:output_bound_sdp} shows that an output peak bound may be obtained via a two-level optimization. The lower-level is an $\alpha$-parameterized SDP. The upper-level is a scalar minimization of a scalar function over a bounded region $\alpha \in (0, \bar{\alpha})$, which may be approached with a global optimization method such as the Brent's algorithm~\cite[Ch.~5]{brent_algorithm}. While the bound computed by this optimization is very accurate, computation speed may be a concern for high-frequency control applications. An alternative looser bound on $\|\By(t)\|_{\BS}^2$ may be obtained by solving a series of algebraic Lyapunov equations.


\begin{theorem}\label{thm:output_bound_lyap}
Consider system~\eqref{eq:sys_relaxed} with bounded disturbance $\norm{\bar{\Bw}}_2 \leq 1$. The output of any trajectory, starting at $\tilde{\Bz}_0$, is bounded for all $t \geq t_0$ as follows:
\begin{equation}
\|\By(t)\|_{\BS}^2 \leq \eta(t_0) \leq  \hat{\delta}_\eta(\alpha^*;\tilde{\Bz}_0) \coloneqq \min_{\alpha \in (0, \bar{\alpha})} \hat{\delta}_\eta(\alpha;\tilde{\Bz}_0)
\end{equation}
where $\bar{\alpha} \coloneqq - 2 \max \prl{\text{real} \prl{\text{spec}(\bar{\BA})}}$ and
\begin{equation}
\label{eq:output_bound_lyap}
\hat{\delta}_\eta (\alpha;\tilde{\Bz}_0) \coloneqq \lambda_{\max} \prl{\BS^{1/2}\bar{\BC} \BQ_{\alpha} \bar{\BC}^{\top}\BS^{1/2}} \max \crl{\tilde{\Bz}_0^T \BQ_{\alpha}^{-1} \tilde{\Bz}_0, 1}
\end{equation}
where $\BQ_\alpha$ is the solution of Lyapunov equation:
\begin{equation}
\bar{\BA} \BQ + \BQ \bar{\BA}^{\top} + \alpha \BQ + \alpha^{-1} \bar{\BB} \bar{\BB}^{\top} = \Bzero \label{eq:inv_lyap}
\end{equation}
\scaleLine{Moreover, $\hat{\delta}_\eta(\alpha^*;\Bzero)$ is an ultimate bound for $\|\By(t)\|_{\BS}^2$ as $t \to \infty$.}
\end{theorem}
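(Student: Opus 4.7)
The plan is to exploit the invariant-ellipsoid machinery of Lemma~\ref{lemma:invariant_ellipsoid}, but with a \emph{closed-form} choice of $\BP$ derived from the Lyapunov equation~\eqref{eq:inv_lyap} rather than an SDP-optimized one. First I would set $\BP \coloneqq \BQ_\alpha^{-1}$ and observe that pre- and post-multiplying~\eqref{eq:inv_lyap} by $\BP$ recovers the Schur complement of the block LMI~\eqref{eq:LMI_PIE}. Hence $\CE(\BP,\Bzero)$ is positively invariant by Lemma~\ref{lemma:invariant_ellipsoid}, provided $\BQ_\alpha \succ \Bzero$; and this holds precisely on $\alpha \in (0,\bar{\alpha})$, because the rewritten equation $(\bar{\BA} + (\alpha/2)\BI)\BQ + \BQ(\bar{\BA} + (\alpha/2)\BI)^{\top} = -\alpha^{-1}\bar{\BB}\bar{\BB}^{\top}$ admits a unique positive-definite solution exactly when $\bar{\BA} + (\alpha/2)\BI$ is Hurwitz.

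Second, for a general $\tilde{\Bz}_0$ that need not lie in $\CE(\BP,\Bzero)$, I would differentiate $V(\tilde{\Bz}) \coloneqq \tilde{\Bz}^{\top}\BP\tilde{\Bz}$ along~\eqref{eq:sys_relaxed}, apply Young's inequality to the cross term $2\tilde{\Bz}^{\top}\BP\bar{\BB}\bar{\Bw}$, and combine with $\|\bar{\Bw}\|\leq 1$ and the Schur-complement identity above to obtain the clean estimate $\dot{V} \leq -\alpha(V-1)$. This dissipation inequality immediately shows that every sublevel set $\{\tilde{\Bz} : V(\tilde{\Bz}) \leq \beta\}$ with $\beta \geq 1$ is positively invariant and that $V(\tilde{\Bz}(t))$ asymptotically enters $[0,1]$.

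Third, picking $\beta \coloneqq \max\{\tilde{\Bz}_0^{\top}\BQ_\alpha^{-1}\tilde{\Bz}_0,\,1\}$ guarantees $\tilde{\Bz}_0 \in \{V \leq \beta\}$, so the trajectory remains there for all $t \geq t_0$. The output peak over this ellipsoidal set is a standard generalized-eigenvalue computation: by the change of variables $\tilde{\Bz} = \sqrt{\beta}\,\BP^{-1/2}\Bv$ with $\|\Bv\|\leq 1$, the maximum of $\tilde{\Bz}^{\top}\bar{\BC}^{\top}\BS\bar{\BC}\tilde{\Bz}$ over $\{V \leq \beta\}$ equals $\beta\,\lambda_{\max}\prl{\BP^{-1/2}\bar{\BC}^{\top}\BS\bar{\BC}\BP^{-1/2}}$, which I would rewrite as $\beta\,\lambda_{\max}\prl{\BS^{1/2}\bar{\BC}\BQ_\alpha\bar{\BC}^{\top}\BS^{1/2}}$ using the identity $\lambda_{\max}(\BX\BY) = \lambda_{\max}(\BY\BX)$. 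This is exactly $\hat{\delta}_\eta(\alpha;\tilde{\Bz}_0)$ in~\eqref{eq:output_bound_lyap}. Minimizing over $\alpha \in (0,\bar{\alpha})$ yields the stated bound, and the ultimate bound follows by replacing $\beta$ with $1$, reflecting that only the $\CE(\BP,\Bzero)$ ellipsoid matters asymptotically.

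The main obstacle I anticipate is the Young's inequality step: the penalty term $\alpha^{-1}\|\bar{\BB}^{\top}\BP\tilde{\Bz}\|^{2}$ that it produces must be \emph{exactly cancelled} by the $\alpha^{-1}\BP\bar{\BB}\bar{\BB}^{\top}\BP$ contribution in the Schur complement of~\eqref{eq:inv_lyap}. This is precisely why the specific scaling $\alpha^{-1}\bar{\BB}\bar{\BB}^{\top}$ appears in the Lyapunov equation; any other choice would leave a residual that breaks the dissipation inequality $\dot{V} \leq -\alpha(V-1)$ and, with it, the positive invariance of all sublevel sets $\{V \leq \beta\}$ for $\beta \geq 1$.
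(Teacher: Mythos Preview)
Your proposal is correct and follows essentially the same approach as the paper: both verify that $\BP=\BQ_\alpha^{-1}$ yields an invariant ellipsoid via the Schur-complement link between~\eqref{eq:inv_lyap} and~\eqref{eq:LMI_PIE}, then bound the output peak over that ellipsoid by the Rayleigh-quotient/eigenvalue computation. The only notable difference is that the paper delegates the nonzero-initial-condition step to \cite{brockman1998quadratic}, whereas you carry it out explicitly via the dissipation inequality $\dot{V}\leq -\alpha(V-1)$ and the scaled sublevel set $\{V\leq\beta\}$; this makes your argument more self-contained but is mathematically the same mechanism.
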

\begin{proof}
	See Appendix \ref{app:proof_lyap_output_bound}
\end{proof}

Thm.~\ref{thm:output_bound_lyap} does not jointly consider the initial condition constraint while minimizing the output bound. As a result, $\hat{\delta}_\eta(\alpha_{lyap}^*;\tilde{\Bz}_0 ) \geq \delta_\eta(\alpha_{sdp}^*;\tilde{\Bz}_0)$. The two $\alpha^*$ values are usually not the same.  when $\norm{\tilde{\Bz}_0}$ is not negligible compared to $\norm{\bar{\Bw}}$. Albeit looser, the bound in Thm.~\ref{thm:output_bound_lyap} is computationally much cheaper to obtain than solving a series of SDPs, as required by Thm.~\ref{thm:output_bound_sdp}.

\section{Safe and Stable Tracking using a Reference Governor}
\label{sec:tracking}

\begin{figure}[t]
	\centering
	\includegraphics[width=\linewidth]{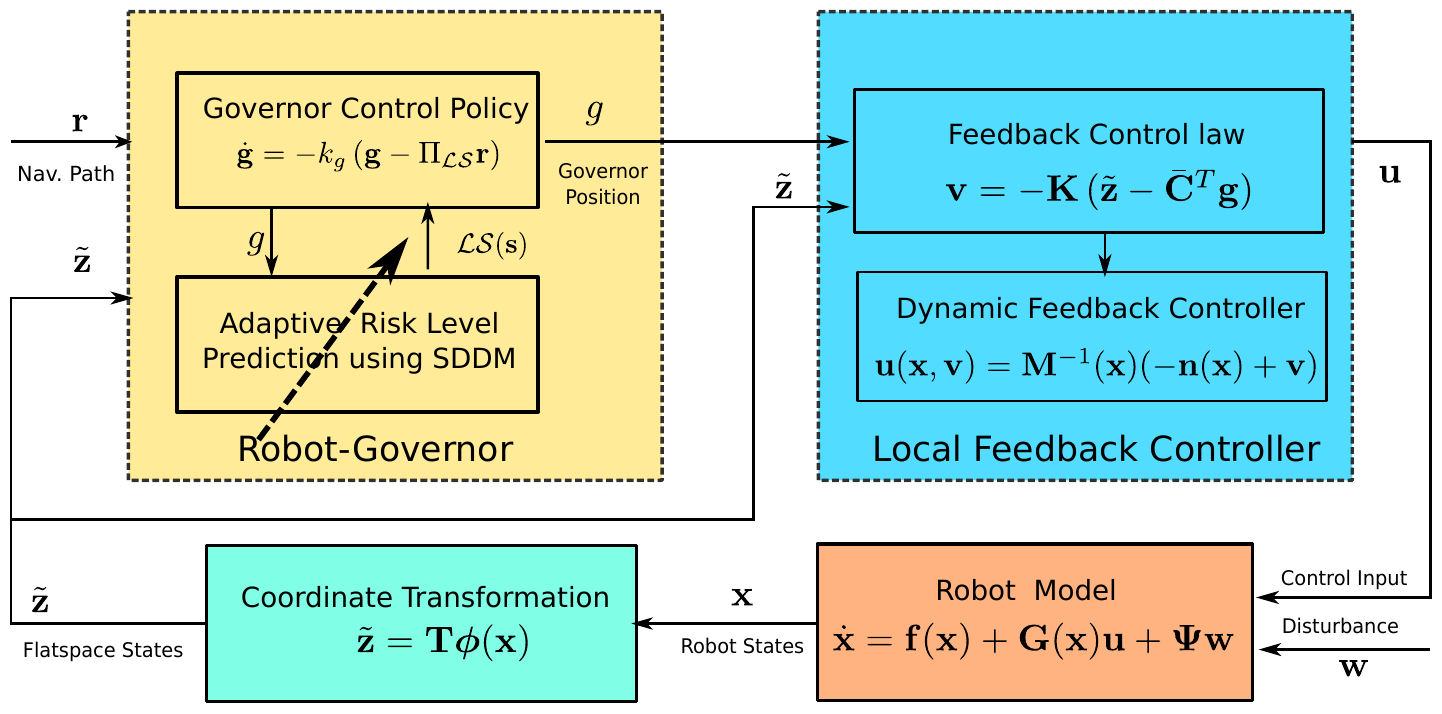}
	\caption{Structure of robot-governor system for safe path-following. Given a feasible path in working space, the reference governor evaluates safety level of the system with respect to local environment and updates governor position and sends it out to local feedback controller which drives the robot stable and safely towards the goal along navigation path.}
	\label{fig:rgs_structure_rss}
\end{figure}

Instead of regulating the linearized system~\eqref{eq:sys_relaxed} to a fixed equilibrium, we will use the output trajectory bound provided by Thm.~\ref{thm:output_bound_sdp} or Thm.~\ref{thm:output_bound_lyap} to decide whether it is safe to move the equilibrium point along the desired path $\Br$. We introduce a \emph{reference governor}~\cite{garone2016_ERG, kolmanovsky2014ref_cmd_gov}, a virtual first-order system $\dgp = \Bu_\Bg$ whose state $\Bg \in \mathbb{R}^{\nOutput}$ behaves as a real-time adaptive output reference signal, slowing down if the output bound may violate safety and speeding up otherwise. The goal is to design a governor control policy $\Bu_\Bg(t)$ that tracks the desired output path $\Br$, while not allowing the output bound $\delta_\eta(\alpha^*(t);\tilde{\Bz}(t))$ to exceed the distance to the obstacles $\CO$. Consider an augmented robot-governor system with state $\stateRgs \coloneqq \col{\tRbt, \gp}$ defined as:
\begin{equation} \label{eq:rgs_system}
\begin{aligned}
\begin{bmatrix}
\dtRbt\\
\dgp
\end{bmatrix} &= \begin{bmatrix}
\bar{\BA} \prl{\tRbt - \bar{\BC}^{\top} \gp}
 + \bar{\BB} \bar{\Bw} \\
\Bu_\Bg
\end{bmatrix} \\
\bar{\By} &= \bar{\BC} (\tRbt - \bar{\BC}^{\top} \gp) 
\end{aligned}
\end{equation}
where the equilibrium point of~\eqref{eq:sys_relaxed} has been shifted to $\col{\gp,\Bzero}$ so that the output of the linearized system tracks $\gp$. We define a local safe zone for the augmented system based on the output bound $\delta_\eta(\alpha^*;\tilde{\Bz})$ in~\eqref{eq:output_bound_sdp}.
\begin{definition}
	\label{def:LS}
	A \emph{local safe zone} is a time-varying set that at time $t$ depends on the robot-governor state $\stateRgs$ as follows:
	\begin{equation}
	\label{eq:LS}
	\LS(\stateRgs) \coloneqq \crl{ \Bq \in \CF \mid d_{\BS}^2(\Bq, \gp) \leq  \max \prl{0, \DeltaE(\stateRgs)} },
	\end{equation}
	where $\DeltaE(\stateRgs) \coloneqq d_\BS^2 \prl{\gp, \CO} - \delta_\eta(\alpha^*;\tilde{\Bz}-\bar{\BC}^{\top}\gp) - \epsilon_E$ is a measure of leeway to safety violation, $\delta_\eta$ is a bound on the output peak of $\bar{\By}$ measured by a quadratic norm defined by $\BS \in \bbS^{\nPath}_{>0}$, and $\epsilon_E > 0$ is a small constant to accommodate numerical errors.
\end{definition}

The first term, $d_\BS^2 \prl{\gp, \CO}$, in the definition of $\DeltaE(\stateRgs)$ estimates the (quadratic) distance from the desired system equilibrium $\gp$ to the nearest obstacles. The second term, $\delta_\eta(\alpha^*;\tilde{\Bz}-\bar{\BC}^{\top}\gp)$ evaluates the maximum deviation of the system output $\bar{\By}$ from the equilibrium point $\gp$. The value $\Delta E(\stateRgs)$ reflects the level of risk of the robot-governor system at the current state $\stateRgs$. Intuitively, it is a measure of remaining energy that the system or the governor may use before safety is endangered. The requirement that $\Delta E(\Bs) \geq 0$ only places a constraint on the magnitude of $\norm{\dgp}$, so $\dgp / \norm{\dgp}$ is a degree of freedom that can be utilized to make $\gp$ asymptotically tend toward a desired goal region. We use any available energy to move the governor $\gp$ along the desired path $\Br$. 

\begin{definition}
	\label{def:LPG}
	A \emph{local projected goal} is the furthest point along the path $\Br$ that intersects with the local safe zone $\LS(\stateRgs)$:
	\begin{equation}
	\label{eq:local_projected_goal}
	\lpg(\stateRgs) = \Pi_{\LS(\stateRgs)} \Br \coloneqq \max_{\sigma \in [0,1]} \crl{ \Br(\sigma) \mid  \Br(\sigma) \in \LS(\stateRgs)}.
	\end{equation}
\end{definition}

We use the informal notation $\Pi_{\LS(\stateRgs)} \Br$ because determining the local projected goal is equivalent to finding the metric projection of $\Br$ on the boundary of the local safe zone. We define the \emph{governor control policy} $\Bu_\Bg$ to track the local projected goal:
\begin{equation} \label{eq:gov_ctrl} 
\Bu_\Bg(t) = - k_g (\gp(t) - \lpg(t)) 
\end{equation}
where $k_g > 0$ is a control gain for the governor controller.

Note that the local safe zone set is an ellipsoid whose size is determined by the free energy $\DeltaE(\stateRgs)$ and whose shape is determined by the positive definite matrix $\BS$. If the volume of $\LS(\stateRgs)$ is large (no nearby obstacles), the local projected goal will be far along $\Br$, leading to fast governor and, in turn, system motion. The shape of $\LS(\stateRgs)$ can be chosen via $\BS$ to obtain fast tracking even if there are nearby obstacles that do not interfere with the direction of motion. More precisely, $\BS$ may be chosen adaptively to be elongated in the direction from the system output $\By$ to the equilibrium point $\gp$. 

Finally, we derive conditions that guarantee the robot remains safe and ultimately bounded around a static equilibrium and that the control policy~\eqref{eq:gov_ctrl} applied to the robot-governor system~\eqref{eq:rgs_system} tracks the desired path $\Br$ safely, converging to a small region around the goal, determined by the input noise bound $\delta_{\Bw}$.

\begin{theorem} \label{thm:safety_static_gov}
Let $\col{\tRbt_0, \gp_0}$ be any initial state for~\eqref{eq:rgs_system} with $\bar{\By}, \gp_0 \in \CF$ and let $\Bu_\gp(t) \equiv \Bzero$ so that the governor remains static, i.e., $\gp(t) \equiv \gp_0$. Suppose that the following safety condition is satisfied:
\begin{equation}
\label{eq:safety_condition}
\delta_\eta(\alpha^*;\tRbt_0-\bar{\BC}^\top\gp_0) \leq d^2_{\BS}(\gp_0, \CO),
\end{equation}
where $\delta_\eta$ is an upper bound on $\norm{\bar{\By}(t)}_{\BS}^2$, e.g., obtained from Thm.~\ref{thm:output_bound_sdp} or Thm.~\ref{thm:output_bound_lyap} for any $\BS \in \bbS^{\nOutput}_{>0}$. Then, output trajectory is collision free, i.e., $\By(t) \in \CF$ for all $t \geq t_0$, and ultimately bounded by $\delta_\eta(\alpha^*;\Bzero)$.
\end{theorem}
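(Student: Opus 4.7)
The plan is to reduce the claim to a direct application of Theorems~\ref{thm:output_bound_sdp} or~\ref{thm:output_bound_lyap}. Since $\Bu_\gp(t) \equiv \Bzero$, the governor stays fixed at $\gp_0$, so introducing the shifted coordinate $\tilde{\Bz}'(t) \coloneqq \tRbt(t) - \bar{\BC}^\top \gp_0$ turns the robot-governor equations~\eqref{eq:rgs_system} into $\dot{\tilde{\Bz}}' = \bar{\BA}\, \tilde{\Bz}' + \bar{\BB}\, \bar{\Bw}$ with output $\bar{\By}(t) = \bar{\BC}\, \tilde{\Bz}'(t) = \By(t) - \gp_0$. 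This is exactly the form analyzed in Section~\ref{sec:regulation}, with initial condition $\tilde{\Bz}'(t_0) = \tRbt_0 - \bar{\BC}^\top \gp_0$. Before invoking the peak-output bounds, I would verify the hypothesis $\norm{\bar{\Bw}} \leq 1$: combining Assumptions~\ref{assump:bounded_noise} and~\ref{asp:Bw_bound}, $\norm{\bar{\Bw}} = \norm{\BM(\Bx)\Bw}/(\gamma(\Bx_0)\delta_\Bw) \leq \norm{\BM(\Bx)}_2 \norm{\Bw}/(\gamma(\Bx_0)\delta_\Bw) \leq 1$, so Theorem~\ref{thm:output_bound_sdp} (or~\ref{thm:output_bound_lyap}) applies.

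Applying the chosen output-peak theorem to the shifted system yields, for every $t \geq t_0$,
\begin{equation*}
\norm{\By(t) - \gp_0}_{\BS}^2 \;\leq\; \delta_\eta(\alpha^*;\, \tRbt_0 - \bar{\BC}^\top \gp_0).
\end{equation*}
Chaining this with the safety hypothesis~\eqref{eq:safety_condition} gives $\norm{\By(t) - \gp_0}_{\BS} \leq d_{\BS}(\gp_0, \CO)$. A reverse triangle inequality in the $\BS$-induced metric then shows that for every $\Bq \in \CO$,
\begin{equation*}
\norm{\By(t) - \Bq}_{\BS} \;\geq\; \norm{\gp_0 - \Bq}_{\BS} - \norm{\By(t) - \gp_0}_{\BS} \;\geq\; d_{\BS}(\gp_0,\CO) - d_{\BS}(\gp_0,\CO) \;\geq\; 0,
\end{equation*}
with strict positivity for any $\Bq$ achieving a distance greater than the infimum. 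Taking the infimum over $\Bq \in \CO$ confirms $\By(t) \notin \CO$, hence $\By(t) \in \CF$ for all $t \geq t_0$.

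The ultimate-boundedness conclusion is then inherited verbatim from the ``moreover'' clauses of Theorems~\ref{thm:output_bound_sdp} and~\ref{thm:output_bound_lyap}: since $\delta_\eta(\alpha^*;\Bzero)$ is the asymptotic bound on $\norm{\bar{\By}(t)}_{\BS}^2$ for the shifted system, it is also the asymptotic bound on $\norm{\By(t) - \gp_0}_{\BS}^2$, which is the quantity the statement claims.

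There is no hard technical step; the proof is essentially bookkeeping. The only thing requiring care is ensuring the hypotheses line up across the shift: that the same matrix $\BS$ appears in both $\delta_\eta$ and $d_{\BS}(\cdot,\CO)$, that the noise-rescaling $\bar{\Bw}$ still satisfies the unit-ball bound after the shift (which it does because the dynamics matrix $\bar{\BA}$ and noise term $\bar{\BB}\bar{\Bw}$ are unaffected by translating the equilibrium), and that the initial condition $\tRbt_0 - \bar{\BC}^\top \gp_0$ is precisely the argument fed into $\delta_\eta$ in the safety condition. Once these bookkeeping items are checked, safety and ultimate boundedness follow immediately from the earlier theorems.
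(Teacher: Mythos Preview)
Your proposal is correct and follows essentially the same route as the paper's proof: shift coordinates so that the static governor $\gp_0$ becomes the origin, invoke Theorem~\ref{thm:output_bound_sdp} or~\ref{thm:output_bound_lyap} to bound $\norm{\By(t)-\gp_0}_{\BS}^2$, and then combine with the safety hypothesis~\eqref{eq:safety_condition} to conclude $\By(t)\in\CF$. The paper phrases the last step via ellipsoid containment ($\CE(\delta_\eta^{-1}\BS,\gp_0)\subseteq\CE(d_{\BS}^{-2}(\gp_0,\CO)\BS,\gp_0)\subseteq\CF$) rather than your reverse triangle inequality, but these are equivalent; your explicit verification that $\norm{\bar{\Bw}}\leq 1$ is a nice addition that the paper leaves implicit.
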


\begin{proof}
	See Appendix \ref{app:proof_static_gov} 
\end{proof}

\begin{definition}
The set of \emph{safe states} of the robot-governor system~\eqref{eq:rgs_system} includes all states with strictly positive free energy and associated local safe zone with non-empty intersection with the path $\Br$:
	\begin{equation}
	\calS \coloneqq \crl{\Bs \mid \DeltaE(\Bs) > 0, \exists\, \sigma \in \brl{0, 1} \text{ s.t. } \Br(\sigma) \in \LS(\Bs)}.
	\end{equation}
\end{definition}

\begin{definition}
A \emph{goal region} of size $\epsilon$ is a set of states $\Bs= \col{\tRbt,\gp}$ of the robot-governor system~\eqref{eq:rgs_system} such that the output $\By = \bar{\BC}\tRbt$ of the original nonlinear system~\eqref{eq:sys_pf} is at a distance at most $\epsilon$ from the path end point $\Br(1)$:
	\begin{equation}
	\calG(\epsilon) \coloneqq \crl{\Bs = \col{\tRbt,\gp} \mid d_{\BS}^2(\By, \Br(1)) \leq \epsilon,\; \By = \bar{\BC}\tRbt}.
	\end{equation}
\end{definition}

\begin{theorem} \label{thm:moving_gov_safe}
Suppose that the path $\Br$ satisfies the following safety condition:
\begin{equation}
\min_{\sigma \in \brl{0,1}} d_{\BS}(\Br(\sigma), \CO) > \sqrt{\lambda_{\min}(\BS)(\delta_\eta(\alpha^*;\Bzero) + \epsilon_E)}
\end{equation}
for some constant $\epsilon_E > 0$ and bound $\delta_\eta(\alpha^*;\Bzero)$ on the peak output $\max_{t \geq t_0} \|\By(t)\|_{\BS}^2$ of system~\eqref{eq:sys_pf} under the feedback-linearizing controller in~\eqref{eq:sys_quasi_lin_closed_loop}. Then, the closed-loop robot-governor system~\eqref{eq:rgs_system} with governor control policy $\Bu_\Bg$ in~\eqref{eq:gov_ctrl} is asymptotically steered from any safe initial state $\col{\BT\Phi(\Bx_0),\gp_0} \in \calS$ to a goal region $\calG(\delta_\eta(\alpha^*;\Bzero)/\lambda_{\min}(\BS))$ and the robot output trajectory is collision-free for all time, i.e., $\By(t) \in \CF$ for all $t\geq t_0$.
\end{theorem}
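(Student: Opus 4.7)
The plan is to establish three things in sequence: (i) forward invariance of the safe set $\calS$ under the closed-loop dynamics of~\eqref{eq:rgs_system} with the control law~\eqref{eq:gov_ctrl}, (ii) asymptotic convergence of the governor state $\gp(t)$ to the path endpoint $\Br(1)$, and (iii) ultimate boundedness of the robot output $\By(t)$ inside the goal region $\calG(\delta_\eta(\alpha^*;\Bzero)/\lambda_{\min}(\BS))$. Safety then follows at once from~(i): while $\Bs(t) \in \calS$, the ellipsoidal output bound $\delta_\eta(\alpha^*;\tRbt-\bar{\BC}^\top\gp)$ is dominated by $d_{\BS}^2(\gp,\calO)$, so by the definition of the local safe zone in~\eqref{eq:LS} we have $\By(t) \in \LS(\Bs(t)) \subseteq \calF$ for all $t \geq t_0$.

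For step (i), I would first note that, by construction in~\eqref{eq:local_projected_goal}, $\lpg(\Bs) \in \LS(\Bs) \subseteq \calF$, so the first-order law~\eqref{eq:gov_ctrl} steers $\gp$ toward a point that is itself safe. The path-clearance hypothesis, squared, gives $d_{\BS}^2(\Br(\sigma),\calO) > \lambda_{\min}(\BS)(\delta_\eta(\alpha^*;\Bzero) + \epsilon_E)$ for every $\sigma \in [0,1]$, so every candidate governor position along $\Br$ carries a uniform slack over the static worst-case output bound. Combining this slack with a ``frozen-governor'' application of Thm.~\ref{thm:safety_static_gov} at each instant yields the invariant $\DeltaE(\Bs(t)) > 0$.

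For step (ii), I would introduce the path-parameter progress variable $\sigma^*(\Bs) \coloneqq \max\{\sigma \in [0,1] \mid \Br(\sigma) \in \LS(\Bs)\}$ so that $\lpg(\Bs) = \Br(\sigma^*(\Bs))$, together with the Lyapunov candidate $V_\gp \coloneqq \tfrac{1}{2}\|\gp - \lpg\|_{\BS}^2$. When $\lpg$ is momentarily stationary, $\dot V_\gp = -k_g\|\gp-\lpg\|_{\BS}^2 \leq 0$, and the resulting governor motion only relaxes the constraint defining $\sigma^*$, so $\sigma^*$ is non-decreasing in time. A LaSalle-type invariance argument then forces $\sigma^*(t) \to 1$ (otherwise the strictly-interior property $\Br(\sigma) \in \mathring{\calF}$ from Def.~\ref{def:path} would keep the local safe zone strictly advancing), giving $\gp(t) \to \Br(1)$. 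Step (iii) follows immediately: with $\gp \to \Br(1)$, the shifted dynamics inside~\eqref{eq:rgs_system} reduce to an LTI system with bounded disturbance and fixed equilibrium, so Thm.~\ref{thm:output_bound_sdp} (or Thm.~\ref{thm:output_bound_lyap}) yields $\limsup_{t\to\infty}\|\bar\By(t)\|_{\BS}^2 \leq \delta_\eta(\alpha^*;\Bzero)$; converting from the $\BS$-weighted norm via $\lambda_{\min}(\BS)$ places $\By(t)$ asymptotically within the stated goal region.

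The main obstacle is the coupling hidden inside step~(i): as $\gp$ moves, the effective initial condition $\tRbt(t)-\bar{\BC}^\top\gp(t)$ fed into $\delta_\eta(\alpha^*;\cdot)$ changes, which could transiently inflate the output bound and threaten $\DeltaE > 0$. The technical heart of the proof will be showing that this transient excursion is dominated by the static worst-case bound $\delta_\eta(\alpha^*;\Bzero)$ plus $\epsilon_E$, which is precisely the slack guaranteed by the path-clearance hypothesis. I expect this to require a monotonicity or invariant-ellipsoid argument establishing that the shifted state $\tRbt-\bar{\BC}^\top\gp$ cannot leave the ellipsoid certified by the preceding instant's SDP solution, so that the bound remains valid uniformly along the governor's trajectory.
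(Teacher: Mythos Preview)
Your three-part decomposition is sound and mirrors the paper's overall structure, but you have misidentified the ``technical heart'' of step~(i). You propose to show that the shifted state $\tRbt-\bar{\BC}^\top\gp$ remains inside an invariant ellipsoid \emph{uniformly as the governor moves}; this would indeed be delicate, and it is not what the paper does. The paper's mechanism is much simpler: the governor control law~\eqref{eq:gov_ctrl} is \emph{self-limiting}. Since $\lpg(\Bs)\in\LS(\Bs)$, one has $\|\gp-\lpg\|_{\BS}^2\leq\max\{0,\DeltaE(\Bs)\}$, so $\|\Bu_\gp\|\to 0$ as $\DeltaE(\Bs)\downarrow 0$. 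In words, the local safe zone collapses to the singleton $\{\gp\}$, the projected goal coincides with the governor, and the governor freezes. At that point Thm.~\ref{thm:safety_static_gov} applies \emph{literally} (not ``at each instant'' in some frozen-coefficient sense), the robot catches up, $\delta_\eta$ decreases toward $\delta_\eta(\alpha^*;\Bzero)$, and the path-clearance hypothesis forces $\DeltaE$ back up. A continuity argument then shows $\DeltaE$ cannot cross zero. No monotonicity of the SDP ellipsoid along a moving center is needed.

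Two smaller issues. First, your safety conclusion ``$\By(t)\in\LS(\Bs(t))$'' is not what the bounds give you: $\|\By-\gp\|_{\BS}^2\leq\delta_\eta$, whereas $\LS(\Bs)$ has $\BS$-radius $\DeltaE$, and these are not comparable. The correct containment is $\By(t)\in\CE(d_{\BS}^{-2}(\gp,\calO)\,\BS,\gp)\subseteq\calF$. Second, for step~(ii) the paper does not use your Lyapunov candidate $V_\gp$; it argues directly that $\sigma^*$ is nondecreasing and that the system cannot stall at any $\sigma^*<1$ (because the freeze-and-catch-up cycle always restores $\DeltaE>0$), then invokes LaSalle. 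Your $V_\gp$ route is plausible but would need care handling the discontinuity of $\lpg$ as $\sigma^*$ jumps.
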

\begin{proof}
See appendix~\ref{app:proof_rgs}.
\end{proof}

\section{Application: Ackermann-Drive Robot}
\label{sec:application}

We demonstrate the proposed reference-governor controller on a ground wheeled robot with an Ackermann steering mechanism. The state of the robot consists of its position $(x,y)$, orientation $\carorientation$ and steering angle $\carsteering$, while its input is the rear-axle driving speed $\carspeed$ and the steering angular velocity $\omega$. The kinematic model~\cite{deluca1998feedback} describing the robot's motion is:
\begin{equation} \label{eq:sys_ack_kinematic_mdl}
\begin{bmatrix}
\dot{x} \\
\dot{y} \\
\dot{\carorientation} \\
\dcarsteering
\end{bmatrix} = \begin{bmatrix}
\cos \carorientation \\
\sin \carorientation \\
\frac{\tan \carsteering}{\carlengthtt}  \\ 
0 
\end{bmatrix} \carspeed + \begin{bmatrix}
0 \\
0 \\
0 \\
1
\end{bmatrix} \omega
\end{equation}
where $\carlengthtt$ is the distance between the front and rear axles. The model assumes that there is no wheel slip and the two front wheels have the same steering angle. In the literature, this model is also referred to as a bicycle model~\cite{deluca1998feedback}, single-track model~\cite{ackermann1993robust}, or Ackermann vehicle model~\cite{Franch2009_ECC}.
 
The system~\eqref{eq:sys_ack_kinematic_mdl} is not static feedback linearizable but it is linearizable via dynamic extension~\cite{fliess1995flatness, isidori1995nonlinear}. Speed $\carspeed$ and acceleration $a \coloneqq \dot{v}$ are added to the system state $\stateAck = \col{x, y, \carorientation, \carsteering, \carspeed, a}$, while jerk $j \coloneqq \dot{a}$ is added as a control input $\controlAck = \col{j, \omega_{\carsteering}}$. The augmented system can be written in the control-affine form~\eqref{eq:sys_pf} with bounded input disturbance $\Bw  \coloneqq \col{w_{1}, w_{2}}$:
\begin{equation}\label{eq:aug_ack_dynamics}
\begin{aligned} 
\begin{bmatrix}
\dot{x} \\
\dot{y} \\
\dcarorientation \\
\dcarsteering \\
\dcarspeed \\
\dot{a}
\end{bmatrix} &= \begin{bmatrix}
\carspeed \cos \carorientation \\
\carspeed \sin \carorientation \\
\frac{\carspeed}{\carlengthtt} \tan \carsteering \\
0 \\
a \\
0 
\end{bmatrix} + \begin{bmatrix}
0 & 0 \\
0 & 0 \\
0 & 0 \\
0 & 1 \\
0 & 0 \\
1 & 0 
\end{bmatrix} \prl{\begin{bmatrix} 
j \\
\omega
\end{bmatrix} + \begin{bmatrix} 
w_1 \\
w_2
\end{bmatrix}} \\
\end{aligned}
\end{equation}

Choosing position as an output, $\outputAck = \Bh(\stateAck) = \col{x, y}$, makes the system output-feedback linearizable:
\begin{equation}
\label{eq:M_ack_n_ack}
\scaleMathLine[0.89]{\dddoutputAck = \underbrace{\begin{bmatrix}
\cos \carorientation & -\frac{\carspeed^2 \sin \carorientation}{\carlengthtt \cos^2 \carsteering} \\
\sin \carorientation & \frac{\carspeed^2 \cos \carorientation}{\carlengthtt \cos^2 \carsteering} \\	
\end{bmatrix}}_{\BM(\stateAck)}\controlAck + \underbrace{\begin{bmatrix}
- 3 a \dot{\carorientation} \sin \carorientation  
- \carspeed \dot{\carorientation}^2 \cos \carorientation \\
3 a \dot{\carorientation} \cos \carorientation  
- \carspeed \dot{\carorientation}^2 \sin \carorientation
\end{bmatrix}}_{\Bn(\stateAck)}}
\end{equation}
Note that $\det \BM(\stateAck) = \frac{\carspeed^2}{\carlengthtt \cos^2 \carsteering}$, so $\BM(\stateAck)$ is non-singular if $\carspeed \neq 0$ and $\carsteering \neq \pm \frac{\pi}{2}$. Due to the geometry of the steering mechanism, the wheels can never be perpendicular to the vehicle body, while the non-zero speed requirement can be satisfied by initializing our controller just a small initial speed. Hence, Assumption~\ref{assump:feedback_linearizable} is satisfied. The system has vector relative degree $\bfrho = (3,3)$ and $\sum_i \rho_i = 6$, which equals the number of states. By Prop.~\ref{prop:new_coordinates}, the system is full-state input-output linearizable with new coordinates $\Bz = \Phi(\Bx) \coloneqq \col{x, \dot{x}, \ddot{x}, y, \dot{y}, \ddot{y}}$. By Prop.~\ref{prop:fdk_lin}, applying the control input $\controlAck = \BM^{-1}(\stateAck)\brl{\Bv - \Bn(\stateAck)}$ leads to the linear system with state-dependent noise in~\eqref{eq:sys_quasi_lin}.

Instead of assuming a conservative bound on the input noise $\Bw(t)$, we propose theres exist an operating profile that relates the allowable speed $\carspeed$ and steering angle $\carsteering$ of the car. The profile specifies that as the speed $\carspeed$ is increasing, the allowable steering angle $\carsteering$ is decreasing. Since $\carsteering \neq \pm \pi/2$, the operating envelope ensures that some upper bound $\beta < \infty$ exists on $\carspeed^2/(l \cos^2 \carsteering)$. From \eqref{eq:M_ack_n_ack}, we can compute:
\begin{equation}
\scaleMathLine[0.89]{\norm{\BM(\Bx)}_2 = \lambda_{\max}^{1/2}( \textbf{diag} (1, \carspeed^4/(l \cos^2 \carsteering)^2) \leq \max \prl{1, \beta},}
\end{equation}
showing that Assumption~\ref{asp:Bw_bound} is satisfied with $\gamma(\Bx_0)\equiv \max \prl{1, \beta}$. Following the approach in Sec.~\ref{sec:regulation}, we choose a stabilizing control gain $\BK$ for the linear system in~\eqref{eq:sys_quasi_lin} and re-order the states as $\tRbt = \col{x, y, \dot{x}, \dot{y}, \ddot{x}, \ddot{y}}$, arriving at the system in~\eqref{eq:sys_relaxed}. The reminder of the control design, including the local safe zone and projected goal computations in~\eqref{eq:LS} and~\eqref{eq:local_projected_goal}, is agnostic to the original system dynamics.

\section{Evaluation}
\label{sec:evaluation}
This section evaluates the output peak bounds provided by Thm.~\ref{thm:output_bound_sdp} and Thm.~\ref{thm:output_bound_lyap} and demonstrates the complete control design for safe ackermann-drive robot navigation in a simulated environment.


\subsection{Output Prediction for Stochastic Linear Systems}
\label{sec:output_bounds}
Consider the linearized system~\eqref{eq:sys_relaxed} corresponding to the Ackermann-drive robot described in Sec.~\ref{sec:application}. Fig.~\ref{fig:PIE_example} compares the invariant ellipsoids $\crl{\bfy \mid \|\bfy\|_{\BS}^2 \leq \delta}$ containing the system output, predicted by Thm.~\ref{thm:output_bound_sdp} and Thm.~\ref{thm:output_bound_lyap} for two different choices of $\BS$. We can observe that all four ellipsoids are valid outer approximations of the space of output trajectories. The SDP-based method (Thm.~\ref{thm:output_bound_sdp}) leads to tighter bounds than the Lyapunov-equation-based method (Thm.~\ref{thm:output_bound_lyap}) regardless of how the quadratic output norm is defined. Using a quadratic norm aligned with the initial direction of motion of the system, $-\bfy(0)/\|\bfy(0)\|$, leads to a large improvement on the tightness of the SDP-based output peak bound. The Lyapunov-equation-based method does not benefit significantly from a specific choice of $\BS$ because the bound estimation does not take the initial condition into account when minimizing the size of the invariant ellipsoid.


\begin{figure}[t]
	\centering
	\includegraphics[width=0.49\linewidth,trim=8mm 8mm 0mm 0mm, clip]{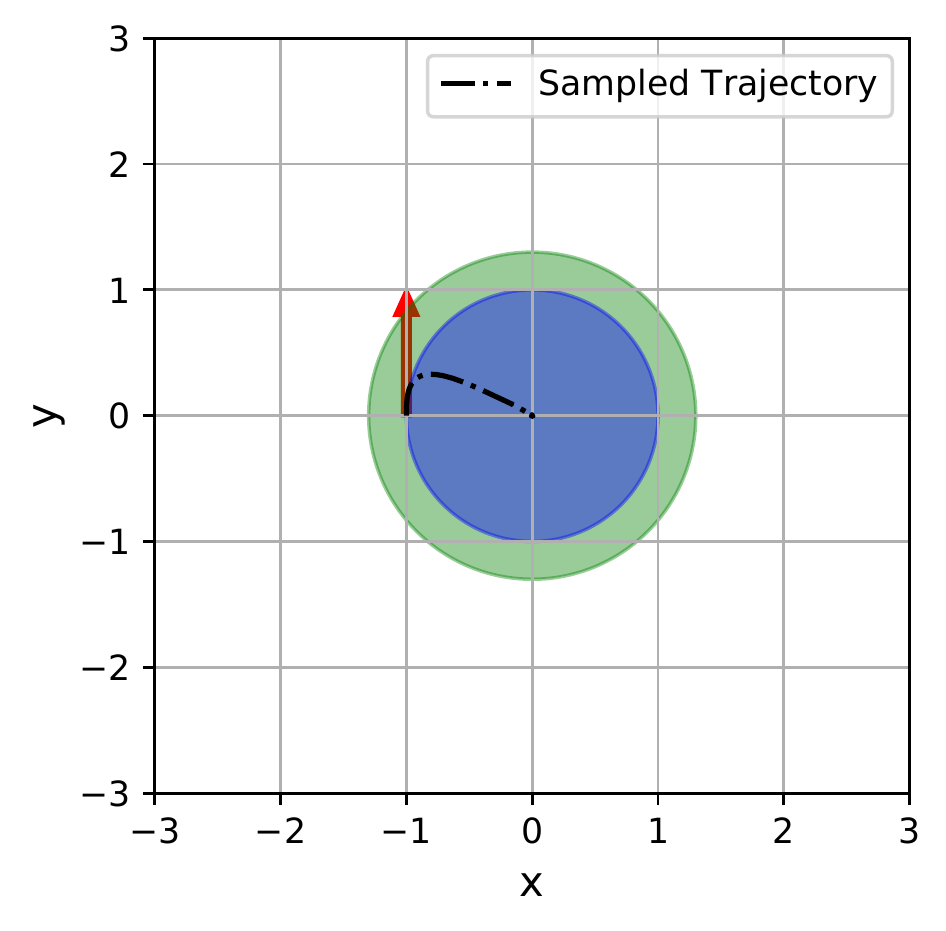}%
	\hfill %
	\includegraphics[width=0.49\linewidth,trim=8mm 8mm 0mm 0mm, clip]{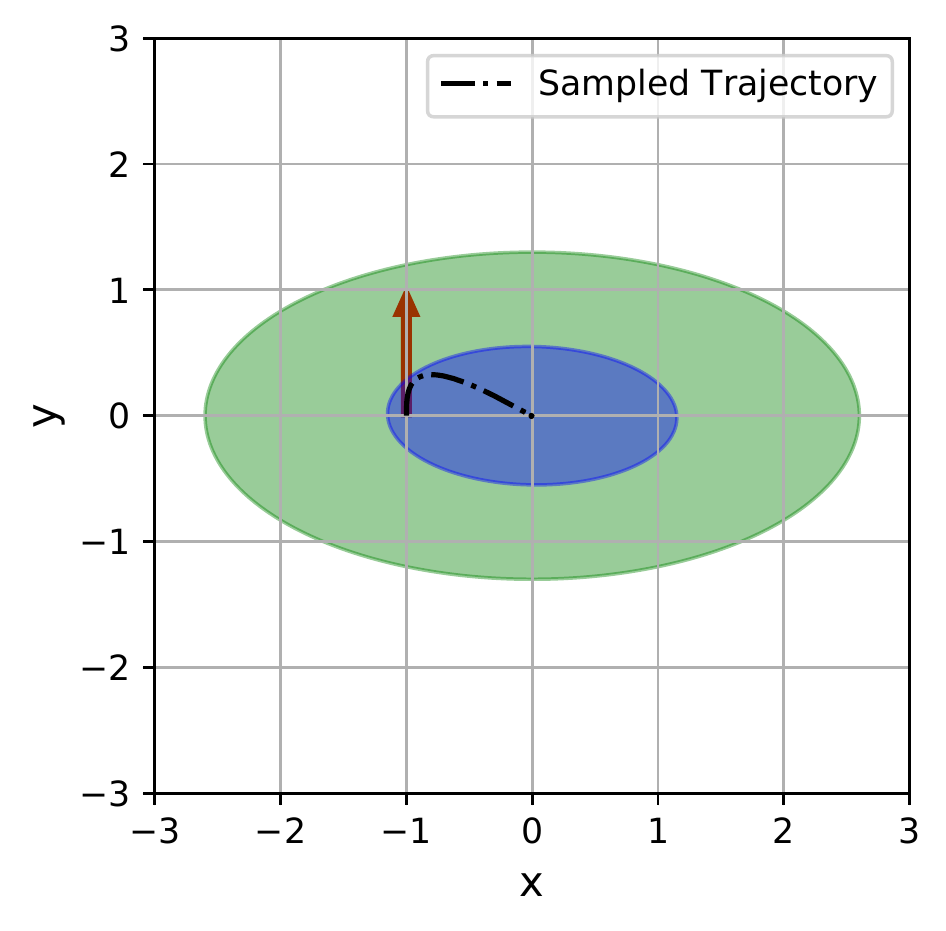}%
\caption{Trajectory output bounds comparison between SDP- (blue, Thm.~\ref{thm:output_bound_sdp}) and Lyapunov-equation-based (green, Thm.~\ref{thm:output_bound_lyap}) methods using two different distance metrics: Euclidean norm (left) and quadratic norm with $\BS =  \textbf{diag}(1,4)$ (right). The closed-loop linear system~\eqref{eq:sys_relaxed} has poles at $[-1, -1,-3,-3,-5,-5]$, with initial condition $\tRbt_0 = \col{-1, 0, 0, 1, 0, 0}$ (red arrow). The disturbance norm and speed-steering ratio bounds were $\delta_\Bw = 0.1$ and $\beta = 10$.}
	\label{fig:PIE_example}
\end{figure}

\subsection{Simulated Ackermann-Drive Navigation}
\label{sec:simulation}
Next, we demonstrate the performance of the complete control design on the Ackermann-drive robot in a simulated navigation tasks. The robot is operating in an unknown environment and relies on a simulated Lidar scanner to measure the Euclidean distance $d^2(\gp(t), \calO)$ from the governor $\gp(t)$ to the obstacles $\calO$ (see Fig.~\ref{fig:corridor_sim_snapshots}). The path $\Br$ is re-planned using the $A^{*}$ algorithm~\cite{ARAstar} from the current governor position $\Bg(t)$ to a fixed goal location within an occupancy grid map~\cite[Ch.~9]{ProbabilisticRoboticsBook} constructed from the lidar scans over time. It is important to note that the controller does not rely on the map of the environment. It only tracks the geometric path provided by $A^{*}$, relying on the lidar scanner to evaluate the local safe zone $\LS(\stateRgs)$ in~\eqref{eq:LS} and determine the the local projected goal $\lpg(\stateRgs)$ in~\eqref{eq:local_projected_goal}. The governor regulates the tracking speed using the output peak bounds proposed in Sec.~\ref{sec:regulation}, slowing down when new obstacles appear to ensure that the real robot system remains safe and stable. Fig.~\ref{fig:safety_risk_quantization.pdf} shows that the output trajectory bounds remain valid over time and that a safe distance from the obstacles is maintained. We can see that the bounds computed by the SDP-based and Lyapunov-equation-based methods are quite similar and both are valid overapproximations of the space of paths from the robot $\By(t)$ position ot the governor position $\gp(t)$. As noted earlier, the Lyapunov-equation-based method is much more computationally efficient, while the SDP method offers very high accuracy, especially with an appropriate choice of a (time-varying) quadratic norm scaling $\BS$. Since $d^2(\gp, \calO) \geq d^2(\gp, \By)$ throughout the simulation, the triangle inequality guarantees that $d(\By, \calO) \geq 0$ and hence the robot is safe, $\By(t) \in \calF$, for all time $t \geq t_0$.

\begin{figure}[t]
	\centering
	\includegraphics[width=1\linewidth]{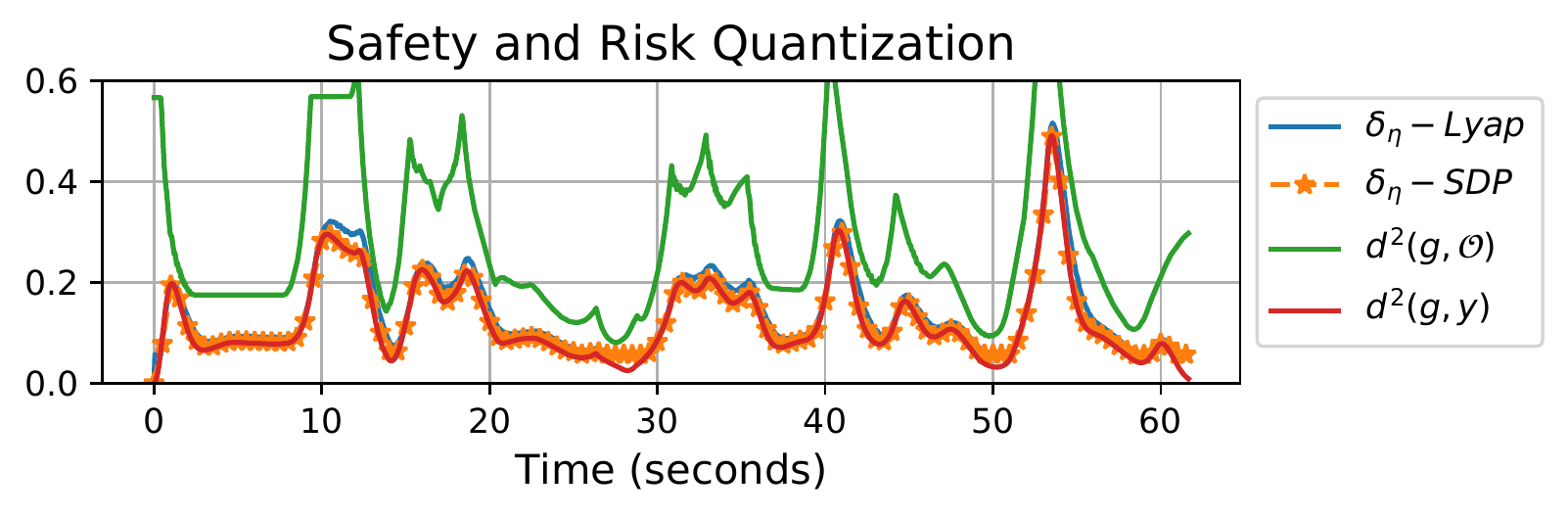}
	\caption{Illustration of the output bounds $\delta_\eta$ (orange) and $\hat{\delta}_\eta$ (blue) predicted over time by Thm.~\ref{thm:output_bound_sdp} and Thm.~\ref{thm:output_bound_lyap}, respectively, for the simulation in Fig.~\ref{fig:corridor_sim_snapshots}. An Ackermann-drive robot is controlled using the governor controller with local safe zone computed based on $\hat{\delta}_\eta$. The SDP-based bound is computed for comparison. The plot shows that the upper bounds are always valid as they remain above the distance $d^2\prl{\gp,\By}$ (red) between the robot position $\By$ and the governor $\gp$ and that both ensure safety as they remain below the distance $d^2\prl{\gp, \calO}$ (green) from the governor $\gp$ to the obstacles $\calO$.
	$\delta_\Bw = 2$ and $\beta = 10$ with closed-loop poles for $\tRbt$-subsystem at $[-5.784,-5.784, -0.858+1.1569j, -0.858-1.1569j, -0.858+1.1569j, -0.858-1.1569j]$ }
	\label{fig:safety_risk_quantization.pdf}
\end{figure}

\begin{figure}[t]
	\centering
	\includegraphics[width=1.0\linewidth]{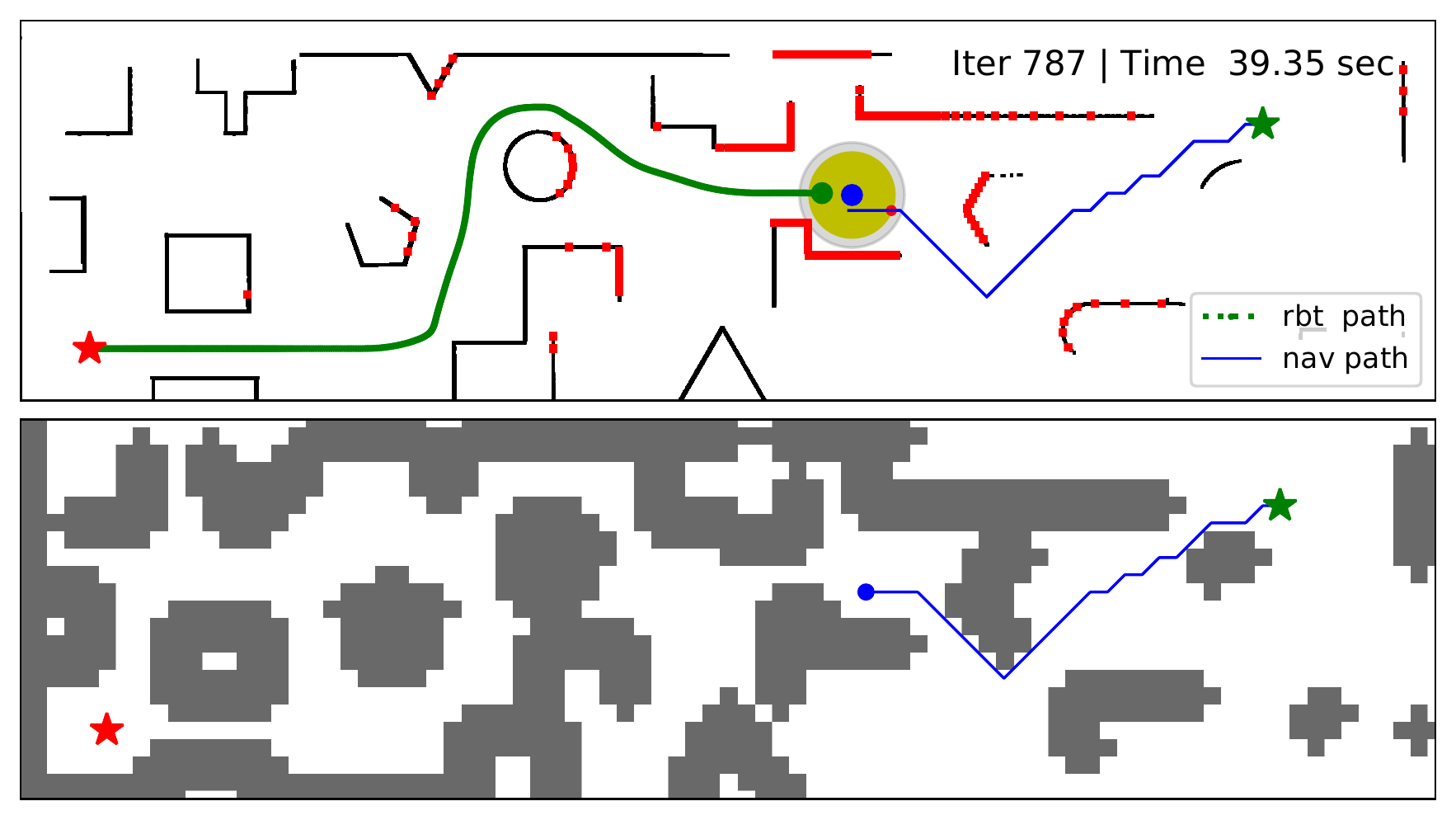}%
	\caption{Snapshots of the robot-governor system navigating a simulated environment. Streaming lidar scan measurements (red dots) are used to update an occupancy grid map (black lines and white regions in top plot) of the unknown environment. An Ackermann-drive robot (green dot) follows a virtual governor (blue dot) whose motion is modulated based on the local energy zone (yellow ball) and the distance to obstacles (gray ball). A navigation path (blue line) is periodically replanned using an $A^*$ planner and an inflated occupancy map (bottom plot).}
	\label{fig:corridor_sim_snapshots}
\end{figure}




\section{Conclusion}
\label{sec:conclusion}
This paper presented a safe, stable, and fast path-following control design for non-linear systems subject to bounded disturbances. The controller relies on feedback-linearization and requires only an output reference signal (rather than a high-order reference model), allowing the use of an efficient geometric path planner for autonomous navigation with differentially flat robots, such as cars or quadrotors. The design guarantees joint stability and safety using only local obstacle information, making it suitable for navigation in unknown environments. Future work will focus on additional hardware experiments pushing the performance limits of the controller in challenging environments and will incorporate ideas from control-barrier-function-based designs which avoid the limitations of feedback linearization. The simplicity of the proposed safety criteria also creates a promising avenue for research on safe online learning of robot dynamics.

\appendices

\section{} \label{app:proof_feedback_linearization} 
Applying the change of variables $\Bz = \Phi(\Bx)$ transforms the system~\eqref{eq:sys_pf} to \emph{normal form}:
\begin{equation} \label{eq:FL_normal_form_nsy}
\begin{aligned}
\dot{\xi}_1^i &= \xi_2^i + \sum_{j=1}^{m} (u_j+w_j) \CL_{\Bg_j} h_i(\Bx) &\\
&\vdots  &\\
\dot{\xi}_{\rho_i}^i  &=  \CL_\Bf^{\rho_i} h_i + \sum_{j=1}^{m}  (u_j+w_j) \CL_{\Bg_j}\CL_\Bf^{\rho_i - 1} h_i(\Bx) &  \\
y_i &= \xi_1^i  \qquad \qquad \qquad \qquad \text{for all}\quad i \in \intset{m} 
\end{aligned}
\end{equation}
From Def.~\ref{def:vector_relative_degree}, $\CL_{\Bg_j} \CL_\Bf^{k-1} h_i(\Bx) = 0$ for all $j\in \intset{\nInput}$, $i \in \intset{\nInput}$, $k \in \intset{\rho_i}$, leading to
\begin{equation}
\begin{aligned}
\dot{\xi}_k^i &= \xi_{k+1}^i, \quad \text{for all } i \in \intset{\nInput}, k \in \intset{\rho_i-1},\\
\begin{bmatrix}
\dot{\xi}_{\rho_1}^1 \\
\vdots\\
\dot{\xi}_{\rho_\nOutput}^{\nOutput}
\end{bmatrix} &= \Bn(\Bx) + \BM(\Bx) \prl{\Bu  + \Bw},
\end{aligned}
\end{equation}
where $\Bn(\Bx) \coloneqq \col{\CL_{\Bf}^{\rho_1}h_1,\ldots, \CL_{\Bf}^{\rho_\nOutput}h_\nOutput}$ and $\BM(\Bx)$ is the decoupling matrix in  \eqref{eq:decoupling_matrix}. Applying $\Bu =  \BM^{-1}(\Bx) \brl{-\Bn(\Bx) + \Bv}$, leads to the following system in coordinates $\Bz = \Phi(\Bx)$:
\begin{equation}
\begin{aligned}
\dot{\Bz} &= \BA \Bz + \BB \Bv + \BB_\Bw(\Bx) \Bw \\
\By &= \BC \Bz,
\end{aligned}
\end{equation}
where $\BB_\Bw(\Bx) = \col{\Bzero, \Bm_1(\Bx), \ldots, \Bzero, \Bm_\nInput(\Bx)} \in \mathbb{R}^{\nRbtState\times\nInput}$ where $\Bm_i(\Bx)$ is the $i$-th row of the decoupling matrix~\eqref{eq:decoupling_matrix}. The matrices $\BA$, $\BB$, $\BC$ are block-diagonal, $\BA = \textbf{diag}(\BA_{1}, \ldots, \BA_{\nOutput})$, $\BB = \textbf{diag}(\Bb_{1}, \ldots, \Bb_{\nOutput})$, $\BC = \textbf{diag}(\Bc_{1}, \ldots, \Bc_{\nOutput})$ with elements $\BA_{i} \in \bbR^{\rho_i \times \rho_i}$, $\Bb_{i} \in \bbR^{\rho_i}$, $\Bc_{i} \in \bbR^{\rho_i}$ in Brunovsky Canonical Form (BCF):
\begin{equation}
\BA_{i} = \begin{bmatrix}
0 & 1 & 0 &\ldots &0 \\
0 & 0 & 1 &\ldots &0 \\
\vdots &\vdots &\vdots &\cdots &1 \\
0 &0 &0 &\ldots &0
\end{bmatrix} \;\;\; 
\Bb_{i} = \begin{bmatrix}
0 \\
\vdots \\
0 \\
1
\end{bmatrix} \;\;\; \Bc_{i} =
\begin{bmatrix}
1 \\
0 \\
\vdots \\
0
\end{bmatrix}^{\top}.
\end{equation}


\section{} \label{app:proof_SDP_output_bound}
From lemma~\ref{lemma:invariant_ellipsoid}, find a invariant ellipsoid is equivalent to an find a feasible solution for constraint \eqref{eq:LMI_PIE} for some $\alpha \geq 0$. Let $\eta(t_0, \tilde{\Bz}_0)$ represents the upper bound on $\norm{\By}_\BS^2$ over all possible realization of trajectories subject to $\bar{\Bw}(t)$ starting with initial state at $\tilde{\Bz}_0$. Without loss of generality, consider $\tilde{\Bz}_0 = \Bzero$. Let $\CE(\BP_\alpha)$ be an $\alpha$-parameterized invariant ellipsoid. Without loss of generality, let us assume $\BS = \BI$ first, i.e., $\By = \bar{\BC} \tilde{\Bz}$, we have:
%
\begin{equation*}
\begin{aligned}
\eta(t_0, \Bzero) &\leq \sup_{\crl{\Bq \mid \Bq^T \BP_\alpha \Bq^T \leq 1}} \Bq^T \bar{\BC}^T \bar{\BC} \Bq & \\
&= \sup_{\crl{\Bv \mid \norm{\Bv} \leq 1}} \Bv^T \BP_\alpha^{-\frac{1}{2}} \bar{\BC}^T \bar{\BC} \BP_\alpha^{-\frac{1}{2}} \Bv & \text{change of variable}\\
&= \lambda_{\max}(\BP_\alpha^{-\frac{1}{2}} \bar{\BC}^T \bar{\BC} \BP_\alpha^{-\frac{1}{2}}) & \text{Rayleigh Quotient}
\end{aligned}
\end{equation*}
Obtaining the smallest upper bound on $\eta$ is equivalent to solve the following optimization problem
\begin{align} \label{eq:opt_trick}
& \underset{\BP_\alpha, \delta_\eta}{\text{minimize}} & &  \delta_\eta \\
& \text{subject to}
& &   \lambda_{\max}(\BP_\alpha^{-\frac{1}{2}} \bar{\BC}^T \bar{\BC} \BP_\alpha^{-\frac{1}{2}}) \leq \delta_\eta \label{eq:lamda_max_bound}
\end{align}
Using \emph{Schur complement} and \emph{Sylvester's law of inertia}, inequality \eqref{eq:lamda_max_bound} can be expressed  as a LMI:
\begin{equation}
\begin{bmatrix}
\BP_\alpha 	& \bar{\BC}^T \\
\bar{\BC}	& \delta_\eta \BI
\end{bmatrix} \succeq \Bzero
\end{equation}  
Considering initial condition $\tilde{\Bz} \neq 0$, we need to incorporate additional constraint $\tilde{\Bz}_0^T \BP_\alpha \tilde{\Bz}_0 \leq 1$. In order to find the invariant set with smallest upper bound $\delta_\eta$, we need to solve the optimization problem over all $\alpha \in (0, \bar{\alpha})$. The upper bound of  $\alpha$ is obtained by pre-solving $\bar{\BA}^T \BP + \BP \bar{\BA} + \alpha \BP \preceq \Bzero$
of LMI \eqref{eq:LMI_PIE}, which requires $(\bar{\BA} + \frac{\alpha}{2} \BI)$ to be Hurwitz.
The ultimate bound of $\norm{\By(t)}^2$ can be obtained by solving this optimization problem without initial value constraint , since the linear system is G.E.S. the effect of initial condition will becomes negligible as $t \rightarrow \infty$.  Since $\BS \in \bbS^{\nOutput}_{>0}$, replacing the $\bar{\BC}$ with $\BS^{1/2} \bar{\BC}$ leads to bounds on $\norm{\By}_{\BS}^2$ straightforwardly. \qed

\section{} \label{app:proof_lyap_output_bound}
This result follows from the work of Abedor et al.~\cite{abedor1996linear} and Brockman and Corless~\cite{brockman1998quadratic}. If $\CE(\BP,\Bzero)$ is positively invariant for~\eqref{eq:sys_relaxed} with $\tilde{\Bz}_0 = \Bzero$, then Lemma~\ref{lemma:invariant_ellipsoid} implies that there exists $\alpha \in (0, \bar{\alpha})$ such that $\crl{\Bq \in \mathbb{R}^{\nRbtState} \mid \Bq^{\top}\BQ_\alpha^{-1}\Bq\leq 1} \subseteq \crl{\Bq \in \mathbb{R}^{\nRbtState} \mid \Bq^{\top}\BP\Bq\leq 1}$. Hence, the smallest invariant ellipsoids are generated by solutions to~\eqref{eq:inv_lyap} as $\alpha$ sweeps $(0, \bar{\alpha})$. Still assuming $\tilde{\Bz}_0 = \Bzero$, from~\eqref{eq:peak_upperbound}, the output peak satisfies
\begin{equation}
\eta(t_0) \leq \!\!\!\max_{\Bq \in \CE(\BQ_\alpha^{-1},\Bzero)} \!\!\|\bar{\BC}\Bq\|^2_{\BS} = \lambda_{\max}\!\prl{\BS^{1/2}\bar{\BC} \BQ_{\alpha} \bar{\BC}^{\top}\BS^{1/2}}
\end{equation}
Sec.~5 in~\cite{brockman1998quadratic} shows how to incorporate a non-zero initial condition in the bound to obtain the result in~\eqref{eq:output_bound_lyap}.\qed

\section{} \label{app:proof_static_gov}
If the governor is static at $\gp_0$ and $\Bw \equiv \Bzero$, the robot-governor system~\eqref{eq:rgs_system} is G.E.S. with respect to the equilibrium $\lpg = \col{\Bg_0, \Bzero_{\nRbtState - \nOutput}, \Bg_0}$ since $\bar{\BA}$ in the linear time-invariant system: $\dtRbt = \bar{\BA} (\tRbt - \bar{\BC}^T \gp_0)$ is Hurwitz. From  Thm.~\ref{thm:output_bound_sdp} or Thm.~\ref{thm:output_bound_lyap}, when the $\tRbt$ subsystem is subject to bounded disturbance $\norm {\bar{\Bw}} \leq 1$, the output norm $\norm{\bar{\By}(t)}_{\BS}^2$ is bounded by $\delta_\eta(\alpha^{*}; \tRbt_0)$ for all $t \geq t_0$. From the ellipsoid definition in~\eqref{eq:ellipsoid_unit_energy_form}, $\norm{\bar{\By}(t)}_{\BS}^2 \leq \delta_\eta(\alpha^{*}; \tRbt_0)$ is equivalent to $\bar{\By}(t) \in \CE \prl{\delta_\eta^{-1}(\alpha^{*}; \tRbt_0) \BS, \Bzero}$, which implies $\By(t) \in \CE \prl{\delta_\eta^{-1}(\alpha^{*}; \tRbt_0)\BS, \gp_0}$. In turn, the safety condition in~\eqref{eq:safety_condition} implies $\CE \prl{\delta_\eta^{-1}(\alpha^{*}; \tRbt_0)\BS, \gp_0} \subseteq  \CE \prl{d^{-2}_{\BS}(\gp_0, \CO)\BS, \gp_0}$. Finally, by definition of $d^2_{\BS}(\gp_0, \CO)$, we can assure $\By(t) \in \CF$. As $t \rightarrow \infty$, the size of the ellipsoid $\CE \prl{\delta_\eta^{-1}(\alpha^{*}; \tRbt_0)\BS, \gp_0}$ will be decreasing and since the noise-free equilibrium point is G.E.S., $\delta_\eta(\alpha^*;\Bzero)$ will be the ultimate bound for $\norm{\bar{\By}}^2$.\qed



\section{} \label{app:proof_rgs}
From Thm.~\ref{thm:safety_static_gov}, we know that if the governor is static at $\gp_0$, the robot-governor system~\eqref{eq:rgs_system} will follow a collision-free trajectory and will be ultimately bounded around the equilibrium point $\lpg = \col{\Bg_0, \Bzero_{\nRbtState - \nOutput}, \Bg_0}$. The governor control policy in~\eqref{eq:gov_ctrl} allows the governor to move only when the interior of $\LS(\Bs)$ is nonempty. From Def.~\ref{def:LS}, this happens only if the safety condition is strictly satisfied, i.e., $\DeltaE(\Bs(t)) = d_{\BS}^2(\gp(t), \mathcal{O}) -  \delta_\eta(\alpha^*;\tRbt(t) - \bar{\BC}^\top \gp(t)) - \epsilon_E > 0$. Since $\By  \rightarrow \bar{\gp}$, $\delta_\eta (\alpha^*;\tRbt - \bar{\BC}^\top \gp) \rightarrow \delta_\eta (\alpha^*; \Bzero)$. By Def.~\ref{def:LPG}, $\Bg_0$ is always on the path, i.e., $\Bg_0 = \Br(\sigma)$ for some $\sigma \in \brl{0, 1}$, hence
$d_{\BS}^2(\Bg_0, \mathcal{O}) \geq d^2_\BS (\Br, \CO) \geq \lambda_{\min} (\BS) d^2(\Br, \CO)$
where $d_\BS(\Br, \CO) := \min_{\sigma \in [0,1]} d_\BS(\Br(\sigma),\CO)$. By assumption, the path is strictly feasible with clearance bound $d(\Br, \CO) > \sqrt{\lambda_{\min} (\BS) (\delta_\eta(\alpha^*;\Bzero) + \epsilon_E)}$, so $d_{\BS}^2(\Bg_0, \mathcal{O}) \geq \lambda_{\min} (\BS) (\delta_\eta(\alpha^*;\Bzero) + \epsilon_E)$ and $\DeltaE(\Bs)$ eventually becomes strictly positive. Once $\DeltaE > 0$, the set $\LS(\Bs)$ becomes an ellipsoid in free space with non-empty interior. Since initially $\Br(\sigma) \in \LS(\Bs_0)$ for some $\sigma \in [0,1]$, the local projected goal in~\eqref{eq:local_projected_goal} will be well defined and when $\LS(\Bs)$ grows, the projected goal will move further along the path $\Br$, i.e., the path length parameter $\sigma$ will increase. Assuming that $\epsilon_E > 0$ is sufficiently large to overcome numerical errors, $(\DeltaE(\Bs) + \epsilon_E)$ cannot suddenly become negative without crossing zero.  If $\DeltaE(\Bs) \downarrow 0$, the local energy zone $\LS(\Bs)$ shrinks to a point, i.e., $\LS(\Bs) = \crl{\gp}$, and hence the governor stops moving and waits until the robot catches up. When the governor is static, and since the safety condition in~\eqref{eq:safety_condition} is satisfied, Thm.~\ref{thm:safety_static_gov} again guarantees that the robot can approach the governor without collisions, increasing $\DeltaE(\Bs)$ in the process. Once $\DeltaE(\Bs)$ goes above $\Bzero$, the governor starts moving towards the goal again by chasing the projected goal. Note that the local projected goal always lies on the navigation path inside the free space and $d(\Br, \CO)$ has enough clearance. Hence, the robot-governor system cannot remain stuck at any configuration except in the ultimate bound region of its state $\Bs(t)$. Using LaSalle's Invariance Principle~\cite{khalil2002nonlinear} and Thm.~\ref{thm:safety_static_gov}, one can conclude that the output invariant set is $\CE \prl{\delta_\eta^{-1}(\alpha^*;\Bzero)\BS, \Br(1)}$ and therefore, $\By(t) \in \CE( \lambda_{\min} (\BS) \delta_\eta^{-1}(\alpha^*;\Bzero) \BI, \Br(1))$, i.e., $d(\By(t), \Br(1)) \leq \delta_\eta(\alpha^*;\Bzero)/\lambda_{\min} (\BS)$. \qed


%

%


\bibliographystyle{cls/abbrvplainnat}
\bibliography{bib/ref.bib}

\end{document}